\newcommand{\qa}{{\bf a}}
\newcommand{\qb}{{\bf b}}
\newcommand{\qc}{{\bf c}}
\newcommand{\qd}{{\bf d}}
\newcommand{\qh}{{\bf h}}
\newcommand{\qn}{{\bf n}}
\newcommand{\qq}{{\bf q}}
\newcommand{\qr}{{\bf r}}
\newcommand{\qx}{{\bf x}}
\newcommand{\qy}{{\bf y}}
\newcommand{\qz}{{\bf z}}
\newcommand{\qA}{{\bf A}}
\newcommand{\qD}{{\bf D}}
\newcommand{\qH}{{\bf H}}
\newcommand{\qI}{{\bf I}}
\newcommand{\qQ}{{\bf Q}}
\newcommand{\qR}{{\bf R}}
\newcommand{\qS}{{\bf S}}
\newcommand{\qU}{{\bf U}}
\newcommand{\qW}{{\bf W}}
\newcommand{\qX}{{\bf X}}
\newcommand{\tS}{{\tt S}}
\newcommand{\tD}{{\tt D}}
\newcommand{\tE}{{\tt E}}
\newcommand{\qzero}{{\bf 0}}
\newcommand{\tr}{\mbox{trace}}
\newcommand{\be}{\begin{equation}} \newcommand{\ee}{\end{equation}}
\newcommand{\bea}{\begin{eqnarray}} \newcommand{\eea}{\end{eqnarray}}
\newtheorem{theorem}{Theorem}
\newtheorem{lemma}{Lemma}
\newtheorem{proposition}{Proposition}
\newtheorem{corollary}{Corollary}
\begin{document}
\title{Improving Physical Layer Secrecy Using Full-Duplex Jamming Receivers }
 \markboth{\textit{A Manuscript Accepted in The IEEE Transactions on Signal Processing } }{} %Wireless  Communications
\author{Gan Zheng, {\it Senior Member, IEEE}, Ioannis Krikidis, {\it Senior Member, IEEE}, Jiangyuan Li, {\it Member, IEEE}, Athina P. Petropulu, {\it Fellow, IEEE},
 and Bj$\ddot{\rm o}$rn Ottersten, {\it Fellow, IEEE}
\thanks{Copyright (c) 2012 IEEE. Personal use of this material is permitted. However, permission to use this
material for any other purposes must be obtained from the IEEE by sending a request to pubs-permissions@ieee.org.}
\thanks{Gan Zheng and Bj$\ddot{\rm o}$rn Ottersten are with the
Interdisciplinary Centre for Security, Reliability and Trust (SnT),
  University of Luxembourg, 4 rue Alphonse Weicker,  L-2721
Luxembourg, E-mail: {\sf\{gan.zheng,   bjorn.ottersten\}@uni.lu.}}
\thanks{Ioannis Krikidis is with the Department of Electrical and Computer Engineering, University of Cyprus, Cyprus, E-mail: {\sf krikidis@ucy.ac.cy}.}
\thanks{Jiangyuan Li and Athina P. Petropulu are with the Department of
Electrical and Computer Engineering, Rutgers-The State University of New Jersey, New Brunswick, NJ 08854 USA, E-mail:{\sf
\{jiangyli,athinap\}@rci.rutgers.edu.}} }

\maketitle

\begin{abstract}
This paper studies secrecy rate optimization in a wireless network with a single-antenna  source, a multi-antenna destination and a multi-antenna
eavesdropper. This is an unfavorable scenario for secrecy performance as the system is interference-limited. In the literature, assuming that the
receiver operates in half duplex (HD) mode,  the aforementioned problem has been addressed via use of cooperating nodes who act as jammers to
confound the eavesdropper. This paper investigates an alternative solution, which assumes the availability of a full duplex (FD) receiver. In
particular, while receiving data, the  receiver transmits jamming noise to degrade the eavesdropper channel. The proposed self-protection scheme
eliminates the need for external helpers and provides system robustness. For the case in which global channel state information is available, we aim
to design the optimal jamming covariance matrix  that maximizes the secrecy rate and mitigates loop interference associated with the FD operation. We
consider both fixed and optimal linear receiver design at the destination, and show that the optimal jamming covariance matrix is rank-1, and can be
found via an efficient 1-D search. For the case in which only statistical information on the eavesdropper channel is available,  the optimal power
allocation is studied in terms of ergodic and outage secrecy rates. Simulation results verify the analysis and demonstrate substantial performance
gain over conventional HD operation at the destination.
 \end{abstract}

\begin{center}
{\bf Keywords}\\
 Full-duplex,  physical-layer security, jamming, beamforming, MIMO, convex optimization.
\end{center}

\section{Introduction}

Wireless communication technology has been integrated in almost all the aspects of social life. Cellular mobile networks, sensor/body networks, smart
grid, smart home and smart cities are just some examples of wireless systems that people are using or will use in the near future.  Under this
uncontrolled growth of wireless personal information transfer, confidentiality and secret transmission is introduced as an emergent research topic.
Traditionally, security is addressed at the upper layers of the protocol stack by using cryptographic tools,  which basically rely on  the
computational limitations of the eavesdroppers. Given that these approaches are sensitive to the wireless transmission and management of the secret
keys, recently there has been growing interest to ensure secrecy and confidentiality at the physical (PHY) layer. PHY layer security is an
information-theoretic approach and achieves secrecy by using channel codes and signal processing techniques. The seminal work of Wyner \cite{WYN}
introduced the degraded wiretap channel and  the fundamental notion of secrecy capacity. Since then, several studies have been proposed in the
literature from the viewpoints of both information theory \cite{Hero-03}--\cite{BEL} and signal processing \cite{artificial_noise}--\cite{DIN2}.

An efficient way to increase the secrecy rate in wireless systems is to degrade the decoding capability of the eavesdroppers by introducing controlled
interference, or artificial noise (AN). When the transmitter has multiple antennas, this can be achieved
by having the transmitter embed in its transmission
 artificial noise,  \cite{artificial_noise}\cite{Swindlehurst_09}\cite{KenMa-11},  which can be designed to avoid  the legitimate receiver and only  affect the
eavesdroppers  \cite{artificial_noise}\cite{Swindlehurst_09}. Under imperfect eavesdroppers channel state information (CSI), an AN-aided outage secrecy-rate
maximization problem was tackled in \cite{KenMa-11}.

When the transmitter is restricted to the use of one antenna,  a bank of external relays can be employed to collaboratively send jamming signals to
degrade the eavesdropper channel. This approach is referred to as cooperative jamming (CJ) \cite{Dong_09}--\cite{DIN2}.  The optimal CJ relay weights
design  for maximizing the secrecy rate
 is investigated in  \cite{Dong_09}\cite{Zheng_TSP_11}.  An opportunistic selection of two relays, where one relay re-forwards the
transmitted signal, while the other uses the CJ strategy is discussed in \cite{KRI} in the context of a  multi-relay network. In \cite{VIL}, the
authors   study the secrecy outage probability using CJ for different levels of CSI. The optimal transmit beamforming together with AN design  for
minimizing the secrecy outage probability is addressed in \cite{Jorswieck-12}\cite{Petropulu-Li-12}. The work in \cite{DIN2}  combines CJ with
interference alignment. The idea of using destination and source as jammers in the first phase of a two-phase relay network, is proposed in
\cite{Swindlehurst-11}. A destination-assisted jamming scheme is used  in \cite{Liu-TIFS13} to prevent the system becoming interference-limited.

Based on the existing literature, CJ approaches are mainly rely on external helpers, thus suffer from issues related to helper mobility,
synchronization and trustworthiness.  More recently, some approaches have been proposed that do not require external helpers for jamming, such as the
{\it iJam} scheme of
 \cite{KAT}, in which   the receiver  acts as a jammer. In the {\it iJam} scheme, the source repeats the transmission, while the receiver randomly jams one of the transmitted copies in each
sample time; since the eavesdropper does not know which sample is ``clean'', it cannot decode the transmitted signal. However, this self-protection
procedure requires a retransmission of the source signal which lowers throughput. In the majority of the literature, the terminals operate in
half-duplex (HD) mode, thus are not able to receive and transmit data simultaneously. However, recent advances on electronics, antenna technology and
signal processing allow the implementation of full duplex (FD) terminals that can receive and transmit data in the same time and on the same
frequency band. When it comes to relaying,  the FD operation can utilize the channel more effectively by achieving end-to-end transmission in one
channel use, as long as the  loop interference (LI) that leaks from the relay output to the relay input can be addressed \cite{RII1,KRI1}. Antenna
isolation, time cancelation and spatial precoding have been proposed in the literature for the mitigation of LI \cite{RII2,DAY,DUA,Viberg10}.
Comparison of the HD and the FD systems with transmit power adaptation is given in \cite{Taneli-Hybrid}.

In the context of PHY layer secrecy, the potential benefits of the FD relaying technology have not yet fully explored. In \cite{MUK} the authors
employ FD technology in a PHY layer secrecy context but from the adversary point of view; the paper investigates an FD eavesdropper with LI  that
optimizes its beamforming weights in order to minimize the secrecy rate of the system.
 An FD receiver generating AN is proposed in \cite{Xiong-12} to impair the eavesdropper's channel. The secrecy performance
 of that method is evaluated based on the outage secrecy region
  from a geometrical perspective.  However, in \cite{Xiong-12} it is assumed that the LI
 can be perfectly canceled at the receiver, which might be too optimistic.
 To the best of our knowledge, the use of FD with spatial LI mitigation in order to enlarge the secrecy rate of the system has not been reported in the literature.

Inspired by the works in \cite{Liu-TIFS13,KAT}, \cite{MUK} and \cite{Xiong-12}, in this paper we study the potential benefits of an FD destination
node simultaneously acting as a jammer and a receiver, with the goal of improving the secrecy rate. We consider  an unfavorable situation of a
single-antenna source, thus source generated    will not
  improve the secrecy rate a \cite{artificial_noise}\cite{Xiong-12} as there are not enough degrees of freedom to design the jamming signal.
 The proposed approach provides a self-protection mechanism at the receiver side without requiring external assistance, out-of-band
channel or data retransmission, and  is mainly of interest in applications in which  assisting nodes are not available and/or are not trusted.
Remarkably, for the case in which the destination has multiple transmit or receive antennas, we show that the system is no longer interference-limited and the secrecy
rate does not suffer from saturation at  high signal-to-noise-ratio (SNR) as in the HD case. In order to tackle the
problem of LI and maximize the secrecy rate, joint transmit and receive beamforming design is studied at the destination. Our contributions are summarized as follows:
 \begin{itemize}
    \item In contrast to previous works employing  FD nodes \cite{Xiong-12},  we do not assume complete self-interference cancelation, but we rather
      employ a LI model whose parameter describes the effect of the passive self-interference suppression.

    \item For the scenario of a  destination with one transmit and  one receive antenna,  and  a single antenna eavesdropper, we derive the closed form
    solution for the power allocation at the receiver. It is shown that due to the LI, the destination usually does not use all the available power.

    \item When the destination has multiple transmit antennas,  the system is no longer
    interference-limited. We show that in that case  the optimal jamming covariance matrix  is rank-1; and  propose efficient algorithms to find the optimal covariance matrix.
    Both fixed and optimal linear receivers are considered.
    \item When only eavesdropper CDI is available, we optimize the power allocation with respect to ergodic secrecy rate and outage secrecy rate.
 \end{itemize}

The organization of the paper is as follows. In Section II, we present the general system model and introduce the FD receiver and the self-jamming
operation. Section III narrows down to a single-antenna case and derives the optimal power allocation for the destination and the source.  Section IV
deals with optimal jamming covariance design. In Section V, we address the jamming design and power allocation to maximize the ergodic secrecy rate
and the outage secrecy rate. In Section VI, we study the optimal jamming covariance design when the eavesdropper knows the FD operation of the
destination and also adopts the optimal linear receiver. Simulation results are presented in Section VII and Section VIII concludes this paper.

\subsection{Notation}
Throughout this paper, the following notation will be adopted. Vectors and matrices are represented by boldface lowercase and uppercase letters,
respectively.  $\|\cdot\|$ denotes the Frobenius norm. $(\cdot)^\dag$ denotes the Hermitian operation of a vector or matrix. $\mathbb{E}[\cdot]$
denotes the expectation of a random variable. The notation $\qA\in \mathbb{C}^{M\times N}$ indicates that $\qA$ is complex matrix with dimensions
$M\times N$. $\qA\succeq \qzero$ means that $\qA$ is positive semi-definite. $\qI$ denotes an identity matrix of appropriate dimension. Finally,
${\bf x}\sim\mathcal{CN}({\bf m},{\bf\Theta})$ denotes a vector $\qx$ of complex Gaussian elements with a mean vector of ${\bf m}$ and a covariance
matrix of ${\bf\Theta}$.

\section{System Model}
 Consider a wireless communication system with one  source ${\tt S}$ with a single antenna, one destination ${\tt D}$ and
 one passive eavesdropper ${\tt E}$, with $M$ and $M_e$ antennas, respectively, as depicted in Fig. \ref{fig:sys}. $\tD$'s total  $M$ antennas are divided to $M_r$ receive
 antennas and $M_t$ transmit antennas with $M=M_t + M_r$.

Let $\qh_{sd}\in\mathbb{C}^{M_r\times1}$, $\qh_{se}\in\mathbb{C}^{M_e\times1}$ and $\qH_{ed} \in \mathbb{C}^{M_e\times M_t}$  denote the
 $\tS-\tD$, $\tS-\tE$ and between $\tD-\tE$ channels, respectively.
 $\qn_D\sim\mathcal{CN}(\qzero,\qI)$ and $\qn_E\sim\mathcal{CN}(\qzero,\qI)$ represent noise  at $\tD$ and $\tE$, respectively.
The transmit signal $s$ is assumed to be  a  zero-mean complex Gaussian random variable with power constraint ${\tt E}[|s|^2]\le P_s$.

In the considered network configuration, there is no external relay or helper to assist $\tD$ against $\tE$. Instead, as shown in Fig. \ref{fig:sys},
$\tD$ helps itself by operating in  FD mode and transmitting jamming signals to degrade the  quality of the eavesdropper's link to $\tS$.
 The receiver transmits a jamming signal while it simultaneously receives
the source signal.  This creates a feedback loop channel  from the relay output to the relay input through the effective channel $\qH_{si} =
\sqrt{\rho} \qH\in \mathbb{C}^{M_r
\times M_t}$, where $\qH$ is %well known as the original  LI  channel and according to the literature is modeled as
a fading loop channel   \cite{RII1,KRI1,RII2,DAY}. In order to make our study more general, we parameterize the LI channel by introducing the
variable $\rho$ with $0 \leq \rho \leq 1$ \cite{MUK}; this parameter models the effect of  passive LI suppression such as antenna isolation
\cite{RII2}. Therefore, $\rho=0$ refers to the ideal case of no LI (perfect antenna isolation) while $0<\rho \leq 1$ corresponds to different LI
levels.

We assume that $\tD$ transmits the jamming signal $\qn\sim\mathcal{CN}(\qzero,\qQ)$. The covariance matrix $\qQ$ will be designed to a maximize the
secrecy rate  with power constraint $p_d=\tr(\qQ)\le P_d$. We assume that $\tE$ is not aware of the FD operation of $\tD$, and simply uses a maximum
ratio combining  (MRC)  receiver, $\qh_{se}^\dag$\footnote{ Once $\tE$ knows the FD mode of $\tD$, it may adapt its strategy as well, e.g., it can
use the linear minimum mean square error (MMSE) receiver to mitigate the jamming from $\tD$. This will be analyzed in Section VI.}. The received
signal at $\tE$ is $\qy_E=\qh_{se} s +  \qH_{ed} \qn + n_e$. After applying MRC receiver, the data estimate at $\tE$ is: \be
    \hat s_e = \frac{\qh_{se}^\dag}{\|\qh_{se}\|}(\qh_{se} s +  \qH_{ed} \qn + n_e ).
\ee
 $\tD$   employs a linear receiver, $\qr$, on its received signal,  $\qy_D=\qh_{sd} s +  \sqrt{\rho}\qH \qn + n_d$, to obtain the data estimate
\be
    \hat s_d = \qr^\dag (\qh_{sd} s +  \sqrt{\rho}\qH \qn + n_d ), \mbox{~with~} \|\qr\|=1.
 \ee

 The  achievable secrecy rate is expressed as \cite{VIL}
 \bea\label{eqn:Rs}
    R_{S} &=&\max\left\{0, \log_2\left(1+ \frac{P_s |\qr^\dag\qh_{sd}|^2}{1+    \rho\qr^\dag\qH\qQ\qH^\dag\qr}\right) \right.\notag\\
  &&  \left.-\log_2\left(1+\frac{P_s\|\qh_{se}\|^2}{1+ \frac{\qh_{se}^\dag\qH_{ed}\qQ\qH_{ed}^\dag\qh_{se}}{\|\qh_{se}\|^2}}\right)\right\}.
\eea

We should note that the above secrecy rate also corresponds to the case of an external jammer helper with $M_t$ antennas,  whose channels to $\tD$
and $\tE$ are $\rho\qH$ and $\qH_{ed}$, respectively.
 The difference in {  the considered} case is that $\tD$   needs to perform LI cancelation to achieve that rate.
  In this paper, we consider LI mitigation in the spatial domain, in order to keep the complexity low.  For FD systems with multiple antennas,  the suppression of  LI in the spatial domain  has been addressed in the literature i.e.,
  \cite{RII2,Viberg10},
 where the low rank of the spatial LI channel is exploited to avoid the transmit signal noise,   which is the main source of the  residual LI.
 Alternatively, the LI can be mitigated in the time domain using analogue or/and digital LI cancelation. However, this approach requires expensive cancelation circuits and  is sensitive to  transmit noise due to  non-idealities at the relay
 node \cite{DAY}.
  It is worth noting that the time cancelation requires two rounds of pilots for the case of successive analogue/digital cancelation, which further increases the complexity of the suppression process \cite{DUA}.

Throughout this paper, we assume that positive secrecy rate is achievable. In the following, we will study the transmit beamforming optimization at  $\tD$ to
 maximize the secrecy rate $R_S$, with the power constraint $P_d$ at $\tD$,  or the joint $\tS-\tD$ power constraint $P_T$.  We will begin with perfect CSI, and later consider  the case
in which  the CSI of $\tE$ is not known.

\section{Perfect CSI with single-antenna terminals}
 In this section, we assume that global CSI is available, $\tE$ has a single receive antenna and $\tD$ has one transmit and one receive antenna.
In this case, the LI cannot be mitigated in the  spatial domain but is   controlled by the power control process. Although substantial isolation
 between the transmitter and receiver channels has been recently reported \cite{Isolation}, the single-antenna case is mainly used as a baseline scheme for comparison reasons and as a guideline  for the multiple-antenna case. The main focus
of the work is the multiple-antenna case which allows LI mitigation in the spatial domain.

 Following the conventional notation, we use lower-case letters to denote scalar channels. Because the jamming signal from $\tD$   degrades the performance
 of both $\tE$ and $\tD$ via the LI channel, the destination needs to carefully choose the transmit power $p_d$ to achieve a good balance. Using (\ref{eqn:Rs}), the secrecy rate maximization via power control at
 $\tD$ can be formulated as
 \be \label{eqn:pd}
   \max_{ 0\le p_d\le P_d} ~ f_\rho(p_d)\triangleq\frac{1+ \frac{P_s |h_{sd}|^2}{1+  \rho p_d |h|^2 }}{1+\frac{P_s|h_{se}|^2}{1+  p_d
   |h_{ed}|^2}}.
 \ee
One may observe that (4) looks as if there was an external helper who transmits jamming signals to improve the secrecy rate.

 Before we derive its solution, we first study the conditions under which when positive secrecy rate is possible.
 \begin{lemma}
 The conditions under which the positive secrecy is achieved are
 \begin{itemize}
    \item $\rho< \min(\frac{|h_{ed}|^2}{|h|^2},\frac{ |h_{sd}|^2(1+  P_d|h_{ed}|^2)}{P_d |h|^2 |h_{se}|^2} -\frac{1}{P_d |h|^2 })$; or
    \item $\rho\ge   \frac{|h_{ed}|^2}{|h|^2}$ and $|h_{sd}|^2> |h_{se}|^2$.
 \end{itemize}
 \end{lemma}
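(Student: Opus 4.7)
The plan is to turn $R_S>0$ into a linear condition in $p_d$ and then analyze when that inequality has a solution on $[0,P_d]$. From (\ref{eqn:Rs})/(\ref{eqn:pd}) we have $R_S>0$ iff $f_\rho(p_d)>1$ for some feasible $p_d$, and cross-multiplying the numerator and denominator of $f_\rho$ reduces this to finding $p_d\in[0,P_d]$ such that
\begin{equation*}
g(p_d)\triangleq (|h_{sd}|^2-|h_{se}|^2)+p_d\bigl(|h_{sd}|^2|h_{ed}|^2-\rho|h|^2|h_{se}|^2\bigr)>0.
\end{equation*}
Because $g$ is affine in $p_d$, everything hinges on the sign of the intercept $g(0)=|h_{sd}|^2-|h_{se}|^2$ and the sign of the slope $b\triangleq|h_{sd}|^2|h_{ed}|^2-\rho|h|^2|h_{se}|^2$.

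In the case $|h_{sd}|^2>|h_{se}|^2$ the choice $p_d=0$ already gives $g>0$, so a positive secrecy rate is always attainable; this immediately yields the second bullet when $\rho\ge|h_{ed}|^2/|h|^2$. In the case $|h_{sd}|^2\le|h_{se}|^2$ we have $g(0)\le0$, so $g$ must cross zero on $(0,P_d]$, which forces $b>0$ and $g(P_d)>0$. I would solve $g(P_d)>0$ explicitly for $\rho$; after collecting terms this is exactly
\begin{equation*}
\rho<\frac{|h_{sd}|^2(1+P_d|h_{ed}|^2)}{P_d|h|^2|h_{se}|^2}-\frac{1}{P_d|h|^2},
\end{equation*}
which is the second quantity inside the $\min$ in the first bullet.

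To line the result up with the exact form of the lemma I would then carry out two small bookkeeping steps. First, rewrite the above bound as $\tfrac{|h_{sd}|^2|h_{ed}|^2}{|h|^2|h_{se}|^2}+\tfrac{|h_{sd}|^2-|h_{se}|^2}{P_d|h|^2|h_{se}|^2}$ to see that in the regime $|h_{sd}|^2\le|h_{se}|^2$ it is automatically $\le\tfrac{|h_{sd}|^2|h_{ed}|^2}{|h|^2|h_{se}|^2}$, so the slope positivity $b>0$ is implied and does not need to be stated separately. Second, check the simple identity that the difference between the two quantities in the $\min$ has the sign of $(|h_{sd}|^2-|h_{se}|^2)(1+P_d|h_{ed}|^2)$; this shows that when $|h_{sd}|^2\le|h_{se}|^2$ the second quantity is the active one (handling Case II), while when $|h_{sd}|^2>|h_{se}|^2$ the first quantity $|h_{ed}|^2/|h|^2$ is active (so the first bullet collapses to $\rho<|h_{ed}|^2/|h|^2$, and together with the second bullet it covers all $\rho$ in Case I).

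Nothing in this is analytically deep; the only mildly finicky part will be the sign-comparison between the two arguments of the $\min$ and verifying that the two bullets are jointly necessary and sufficient, i.e.\ that when neither holds one actually has $g(p_d)\le0$ on $[0,P_d]$ (which follows from $g(0)\le0$ together with either $b\le0$ or $g(P_d)\le0$). I expect that case-tracking, rather than any inequality manipulation, to be the main place where care is needed.
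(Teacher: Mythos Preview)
Your approach is correct and essentially the same as the paper's. The paper reduces $f_\rho(p_d)>1$ to $\frac{|h_{sd}|^2(1+p_d|h_{ed}|^2)}{|h_{se}|^2(1+\rho p_d|h|^2)}>1$, observes this linear-fractional function is monotone in $p_d$ so its maximum lies at an endpoint, and then splits cases according to whether $\rho\lessgtr |h_{ed}|^2/|h|^2$ (which determines the direction of monotonicity and hence which endpoint to test). Your cross-multiplication to the affine function $g(p_d)$ is simply the same endpoint analysis in linear rather than linear-fractional form; the only cosmetic difference is that you split cases by the sign of $|h_{sd}|^2-|h_{se}|^2$ first and then do the sign comparison of the two arguments of the $\min$ to reconcile with the stated bullets, whereas the paper gets the two bullets directly from its $\rho$-based case split.
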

 \begin{proof}
  The positive secrecy is achieved if and only if the optimal objective value of the following problem is strictly greater than 1:
      \be
   \max_{ 0\le p_d\le P_d}  \frac{ |h_{sd}|^2}{1+  \rho p_d |h|^2 } \frac{1+  p_d|h_{ed}|^2}{ |h_{se}|^2}.
 \ee
  It is not difficult to see that the optimal value is achieved at either $p_d^*=0$ or $p_d^*=P_d$.
  \begin{itemize}
    \item When $\rho<\frac{|h_{ed}|^2}{|h|^2}$, $p_d^*=P_d$. Positive secrecy rate is achieved if and only if
     \bea
    \frac{ |h_{sd}|^2}{1+  \rho P_d |h|^2 } > \frac{ |h_{se}|^2}{1+  P_d
   |h_{ed}|^2},  ~~\mbox{or}\\%%
   \rho< \frac{ |h_{sd}|^2(1+  P_d|h_{ed}|^2)}{P_d |h|^2 |h_{se}|^2} -\frac{1}{P_d |h|^2 }.
     \eea
    \item When $\rho\ge \frac{|h_{ed}|^2}{|h|^2}$,  $p_d^*=0$. Positive secrecy rate is achieved if and only if
        \be
        |h_{sd}|^2> |h_{se}|^2.
        \ee
  \end{itemize}
    This completes the proof.
 \end{proof}
  It is clearly seen that when the LI can be sufficiently suppressed, the positive secrecy region can be extended by the FD destination; when the LI
  is above a certain threshold, FD does not provide any performance gain therefore is not necessary.

 Given that a positive secrecy rate is achievable, the optimal solution to (\ref{eqn:pd}) is given in the Proposition \ref{prop:pd} and the proof is provided in Appendix A.
\begin{proposition}\label{prop:pd}
 Suppose  the roots of $f_\rho'(p_d)=0$ are $x_1(\rho), x_2(\rho)$. If they are both real, we assume $x_2(\rho)\ge x_1(\rho)$. If both are complex or non-positive, we define $x_2(\rho)=0$.
 Let  $\delta=\frac{|h_{sd}|^2|h_{ed}|^2}{|h_{se}|^2|h|^2}$. Then the optimal $p_d^*(\rho)$ to maximize $f_\rho(p_d)$ is given below:
         \begin{itemize}
           \item[i)]If $\rho< \min(\delta,1)$, $p_d^*(\rho) =\min(P_d, x_2(\rho))$;% When $P_d= x_2(\rho)$
           \item[ii)]If $1\ge\rho\ge\min(\delta,1)$, $p_d^*(\rho) =0$  or $p_d^*(\rho) =P_d$, whichever gives a higher objective value.
 \end{itemize}
\end{proposition}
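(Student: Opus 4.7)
The plan is to turn the one-dimensional maximization into the study of a quadratic equation for $f_\rho'(p_d)=0$ and to read off the optimum from the sign pattern of that quadratic together with the boundary behavior of $f_\rho$.

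First I would clear the inner fractions to write, with shorthand $a=P_s|h_{sd}|^2$, $b=\rho|h|^2$, $c=P_s|h_{se}|^2$, $d=|h_{ed}|^2$,
\[
f_\rho(p_d)=\frac{N(p_d)}{D(p_d)}, \quad N=(1+a+bp_d)(1+dp_d),\;D=(1+bp_d)(1+c+dp_d),
\]
so that $N$ and $D$ are quadratics in $p_d$ with a common leading coefficient $bd$. Two consequences are immediate and essential: $\lim_{p_d\to\infty}f_\rho(p_d)=1$, and $f_\rho(0)=(1+a)/(1+c)>1$ by the standing positive-secrecy hypothesis. Differentiating, the cubic term of $N'D-ND'$ cancels, leaving the quadratic
\[
P(p_d)=\alpha p_d^{2}+\beta p_d+\gamma,\quad \alpha=bd(bc-ad),\;\beta=2bd(c-a),\;\gamma=cd(1+a)-ab(1+c).
\]
Because $bc-ad=P_s|h|^{2}|h_{se}|^{2}(\rho-\delta)$, the sign of the leading coefficient $\alpha$ exactly tracks $\mathrm{sign}(\rho-\delta)$, which is why $\delta$ is the natural threshold. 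The real roots of $P$, when they exist, are precisely the $x_1(\rho)\le x_2(\rho)$ in the statement.

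For case (i), where $\rho<\min(\delta,1)$, the parabola $P$ opens downward. I would enumerate the possibilities for its real roots. If no real roots exist or both roots are $\le 0$, then $P<0$ on $[0,\infty)$, $f_\rho$ is strictly decreasing there, and the optimum is $p_d^{*}=0$, matching $x_2(\rho):=0$. If $x_1\le 0<x_2$, then $f_\rho$ is increasing on $[0,x_2]$ and decreasing afterwards, so the optimum on $[0,P_d]$ is $\min(P_d,x_2)$. The remaining subcase $0<x_1\le x_2$ must be ruled out: by Vieta, it would force $\gamma<0$ and $\beta>0$ (i.e.\ $c>a$); combined with $\alpha<0$ (i.e.\ $b<ad/c$) and $\gamma<0$ (i.e.\ $b>cd(1+a)/[a(1+c)]$), compatibility demands $a^{2}(1+c)>c^{2}(1+a)$, which by the factorization $a^{2}(1+c)-c^{2}(1+a)=(a-c)(a+c+ac)$ requires $a>c$, contradicting $c>a$. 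Thus the three admissible subcases are unified by the formula $p_d^{*}(\rho)=\min(P_d,x_2(\rho))$.

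For case (ii), where $\rho\ge\min(\delta,1)$, the parabola opens upward ($\alpha\ge0$, with the degenerate $\alpha=0$ edge handled identically as a linear inequality). The crucial observation is that any configuration in which $f_\rho$ is monotonically non-decreasing on $[0,\infty)$ is incompatible with $f_\rho(0)>1>\lim_{p_d\to\infty}f_\rho(p_d)$. This immediately kills the sub-cases of no real roots, double root, and both roots negative. A Vieta/compatibility argument symmetric to the one above (swapping the signs of $\alpha,\beta,\gamma$) also excludes both roots being positive, leaving only $x_1\le 0\le x_2$. In this regime $f_\rho$ is decreasing on $[0,x_2]$ and then increases back toward $1$ on $[x_2,\infty)$, so on $[0,P_d]$ the maximum can only be attained at $p_d=0$ or $p_d=P_d$, giving the claimed selection. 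The main obstacle throughout is the algebraic bookkeeping that rules out the ``wrong'' root configurations; the identity $a^{2}(1+c)-c^{2}(1+a)=(a-c)(a+c+ac)$, together with the positive-secrecy hypothesis $f_\rho(0)>1$, is what makes every excluded case collapse.
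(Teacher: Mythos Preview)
Your approach is essentially the same as the paper's: both reduce $f_\rho'(p_d)=0$ to the same quadratic (up to the factor $bd$), use Vieta's formulas to exclude the configuration of two positive roots, and then read off the optimum from the sign pattern of the quadratic. The paper is slightly more economical in that it proves the ``no two positive roots'' fact once, without splitting on the sign of the leading coefficient, and then handles (i) and (ii) by direct sign analysis of $f_\rho'$ on $[0,x_2]$ and $[x_2,\infty)$; your additional use of $\lim_{p_d\to\infty}f_\rho=1$ is a nice touch the paper does not make explicit.

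There is, however, one incorrect step. The claim that $f_\rho(0)=(1+a)/(1+c)>1$ follows from the positive-secrecy hypothesis is false in general: positive secrecy only means $\max_{p_d\in[0,P_d]}f_\rho(p_d)>1$, not that the value at $p_d=0$ exceeds $1$. Indeed, in case (i) with $|h_{sd}|^2<|h_{se}|^2$ and a sufficiently strong $\tD$--$\tE$ link, one has $f_\rho(0)<1$ yet $f_\rho(p_d)>1$ for suitable $p_d>0$; this is precisely the content of the first bullet of Lemma~1 in the paper. Fortunately you only invoke the claim in case (ii), and there it is unnecessary: if $f_\rho$ is non-decreasing on $[0,\infty)$ and $\lim_{p_d\to\infty}f_\rho=1$, then $f_\rho\le 1$ on all of $[0,\infty)$, which already contradicts the positive-secrecy hypothesis directly---no information about $f_\rho(0)$ is needed. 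With this small repair your argument goes through.
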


% it corresponds to a situation that $\tS-\tE$ and $\tS-\tD$ have the same distance.
As a special case,   when the $\tS-\tD$ and $\tS-\tE$ channels have the same strength or $\tS-\tE$ and $\tS-\tD$ have the same distances, we have the
following result.
\begin{corollary}
$ p_d(\rho)$ is   a monotonically non-increasing function when $|h_{sd}|^2=|h_{se}|^2$.
\end{corollary}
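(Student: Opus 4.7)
The plan is to invoke Proposition~\ref{prop:pd} directly and analyze separately the two regimes it identifies, using the simplification $\delta=|h_{ed}|^2/|h|^2$ that follows from $|h_{sd}|^2=|h_{se}|^2$.

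For the regime $\rho\ge\min(\delta,1)$ (case~ii of Proposition~\ref{prop:pd}), $p_d^*(\rho)$ is the argmax of $f_\rho$ over $\{0,P_d\}$. I would compute $f_\rho(0)=1$ directly using $|h_{sd}|^2=|h_{se}|^2$, and then observe that $\rho\ge\delta$ forces $1+\rho P_d|h|^2\ge 1+P_d|h_{ed}|^2$, which, combined with the same symmetry, yields $f_\rho(P_d)\le 1$. Hence $p_d^*(\rho)=0$ throughout this regime.

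For $\rho<\min(\delta,1)$ (case~i), I would take the logarithmic derivative of $f_\rho$, set it to zero, and cross-multiply. The key computation — and the main obstacle — is an algebraic simplification: when $|h_{sd}|^2=|h_{se}|^2$ the linear coefficient in $p_d$ of the resulting polynomial cancels identically, collapsing the stationarity condition to
\begin{equation}
\rho\,|h|^2\,|h_{ed}|^2\,p_d^2 \;=\; 1+P_s|h_{sd}|^2,
\end{equation}
whose unique positive root is
\begin{equation}
x_2(\rho)\;=\;\sqrt{\frac{1+P_s|h_{sd}|^2}{\rho\,|h|^2\,|h_{ed}|^2}},
\end{equation}
manifestly strictly decreasing in $\rho$. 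Proposition~\ref{prop:pd} then gives $p_d^*(\rho)=\min\!\bigl(P_d,x_2(\rho)\bigr)$ on this interval, which is therefore non-increasing in $\rho$.

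Combining the two regimes, $p_d^*$ is non-increasing on each separately, and at the transition $\rho=\delta$ it drops from the positive value $\min(P_d,x_2(\delta))$ to $0$, so monotonicity is preserved across the boundary. The entire argument thus hinges on the cancellation of the linear coefficient in the stationarity equation, which is a direct consequence of the symmetry $|h_{sd}|^2=|h_{se}|^2$; everything else is routine substitution.
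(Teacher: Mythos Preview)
Your proposal is correct and follows essentially the same route as the paper: under $|h_{sd}|^2=|h_{se}|^2$ the linear term in the stationarity equation cancels, yielding the closed form $x_2(\rho)=\sqrt{(1+P_s|h_{sd}|^2)/(\rho|h|^2|h_{ed}|^2)}$, which is decreasing in $\rho$, and then Proposition~\ref{prop:pd} finishes the argument. The only cosmetic difference is that for the regime $\rho\ge\delta$ the paper argues via the sign of $f_\rho'$ on $[0,x_2(\rho)]$, whereas you compare the endpoint values $f_\rho(0)=1$ and $f_\rho(P_d)\le 1$ directly; both are valid and reach the same conclusion $p_d^*(\rho)=0$.
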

\begin{proof}
 When $|h_{sd}|^2=|h_{se}|^2$,  the equation (\ref{eqn:fxzero}) reduces to $(b-d)x^2 - (1/d-1/b)(1+a)=0$ and we have the  root
 $x_2(\rho)= \sqrt{\frac{1+a}{bd}}=  \sqrt{\frac{1+P_s}{ \rho|h|^2 |h_{ed}|^2}}$, where $a,b,c,d$ are defined in
 Appendix A. %  and  $b-d=\rho|h|^2 - |h_{ed}|^2$,
 \begin{itemize}
            \item[i)] When $\rho< \frac{|h_{ed}|^2}{|h|^2}$, $f_\rho^{'}(p_d)>0$ for $0\le p_d\le x_2(\rho)$, so $p_d^*(x) =\min(P_d, x_2(\rho))$;
            \item[ii)] When $\rho\ge \frac{|h_{ed}|^2}{|h|^2}$, $f_\rho^{'}(p_d)\le0$ for $0\le p_d\le x_2(\rho)$, so  $p_d^*(x) =0$. In this case, positive secrecy rate is not achievable.
 \end{itemize}
 Clearly $x_2(\rho)$ is a monotonically decreasing function and this completes the proof.
\end{proof}

Typical curves of $p_d$ are shown in Fig. \ref{fig:pd:nonmono}, where we plot the normalized $p_d(\rho)$ for two sets of  randomly generated channel
parameters. For each subfigure, we also plot the results by setting $|h_{sd}|^2=|h_{se}|^2=1$, while keeping the other parameters fixed. We can see
that in general the receiver may not always use full power and $p_d(\rho)$ is not necessarily monotonically changing with $\rho$.

This can be explained by the fact that when $\rho$ is small, the self-interference is also small, so the receiver can use full or high power to
confuse the eavesdropper; as $\rho$ increases, the receiver needs to reduce its transmit power in order not to generate too much self-interference;
when $\rho$ is very close to 1, the receiver causes high interference to both itself and the eavesdropper, but if the eavesdropper suffers more,
the receiver can still increase its transmit power, otherwise, it should decrease its power. The results in Fig. \ref{fig:pd:nonmono} verify our
analysis given in Corollary 1.

Next we consider the case where $\tS$ and $\tD$ have a total power constraint $P_T$ and we denote their power as $p_s$ and $p_d$, respectively. We
aim to maximize the secrecy rate below $$ f_\rho(p_s, p_d)\triangleq\frac{1+ \frac{p_s |h_{sd}|^2}{1+  \rho p_d |h|^2 }}{1+\frac{p_s|h_{se}|^2}{1+
p_d
   |h_{ed}|^2}}$$ by using optimal power allocation between $\tS$ and $\tD$.

\begin{proposition}
 If the source and the receiver has a total power constraint $P_T$ and strictly positive secrecy rate is achievable, then full power should be used, i.e.,
 $p_s + p_d=P_T$.
\end{proposition}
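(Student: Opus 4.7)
The plan is to show that the objective $f_\rho(p_s,p_d)$ is strictly increasing in $p_s$ at any point where the secrecy rate is positive, and then conclude by a standard contradiction argument that the joint budget $P_T$ must be fully used at the optimum.

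First I would fix $p_d$ and view $f_\rho$ as a function of $p_s$ alone. Writing
\begin{equation}
A \triangleq \frac{|h_{sd}|^2}{1+\rho p_d |h|^2}, \qquad B \triangleq \frac{|h_{se}|^2}{1+p_d|h_{ed}|^2},
\end{equation}
the objective becomes the linear-fractional function
\begin{equation}
f_\rho(p_s,p_d)=\frac{1+A p_s}{1+B p_s},
\end{equation}
whose derivative in $p_s$ is
\begin{equation}
\frac{\partial f_\rho}{\partial p_s}=\frac{A-B}{(1+B p_s)^2}.
\end{equation}
Hence the sign of $\partial f_\rho/\partial p_s$ is the sign of $A-B$, and crucially this sign coincides with whether the instantaneous secrecy rate $\log_2 f_\rho$ is positive: $f_\rho>1$ if and only if $A>B$.

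Next I would argue by contradiction. Let $(p_s^\star,p_d^\star)$ be an optimal pair. By hypothesis a strictly positive secrecy rate is achievable, so the optimum value satisfies $f_\rho(p_s^\star,p_d^\star)>1$, which by the equivalence above forces $A^\star>B^\star$ at $(p_s^\star,p_d^\star)$. Suppose for contradiction that $p_s^\star+p_d^\star<P_T$. Then there exists $\epsilon>0$ such that $(p_s^\star+\epsilon,p_d^\star)$ remains feasible, and by the derivative computation $f_\rho(p_s^\star+\epsilon,p_d^\star)>f_\rho(p_s^\star,p_d^\star)$, contradicting optimality. Therefore $p_s^\star+p_d^\star=P_T$.

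There is no real obstacle here; the one thing to be careful about is not trying to also monotonically increase $p_d$, since raising the jamming power helps $\tD$ against $\tE$ but also amplifies the loop interference, so $\partial f_\rho/\partial p_d$ can have either sign (exactly the nonmonotone behaviour already displayed in Fig.\ \ref{fig:pd:nonmono}). The argument therefore rests entirely on the monotonicity in $p_s$, and the positive-secrecy assumption is used precisely to certify $A>B$ at the optimum, which is what makes the $p_s$-derivative strictly positive.
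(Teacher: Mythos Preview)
Your proof is correct and follows essentially the same approach as the paper: both arguments observe that, at any feasible point with strictly positive secrecy rate, the condition $\frac{|h_{sd}|^2}{1+\rho p_d|h|^2}>\frac{|h_{se}|^2}{1+p_d|h_{ed}|^2}$ holds, which makes $f_\rho$ strictly increasing in $p_s$, and then push any slack power into $p_s$ to obtain a contradiction. Your version is simply more explicit (writing out the derivative and the equivalence $f_\rho>1\Leftrightarrow A>B$, which tacitly uses $p_s^\star>0$ since $f_\rho(0,p_d)=1$), but the substance is the same.
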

\begin{proof}
    Strictly positive secrecy rate implies that there exists a solution $(p_s, p_d)$ such that
    $$
 \frac{p_s |h_{sd}|^2}{1+\rho p_d |h|^2 }> \frac{p_s|h_{se}|^2}{1+p_d
   |h_{ed}|^2},
    $$
and this means $f(p_s, p_d)$ is increasing in $p_s$. So if there is any unused power $\Delta p$, we can always add it to $p_s$ to obtain
$f(p_s+\Delta p, p_d)>f(p_s, p_d)$.
\end{proof}
Consequently, the power allocation  problem   can be formulated as
\begin{align}
&\max_{0\le\alpha\le 1} \frac{1+\frac{\alpha P_T |h_{sd}|^2}{1+\rho (1-\alpha)P_T|h|^2 }}{1+\frac{\alpha P_T|h_{se}|^2}{1+(1-\alpha)P_T
   |h_{ed}|^2}}, \notag \label{opt2}
\end{align}
where we have assumed that $p_s=\alpha P_T$ and $P_d=(1-\alpha)P_T$ with $0 \leq \alpha \leq 1$.

For simplicity, we assume that $|h_{sd}|^2=|h_{se}|^2=1$, i.e., the channels to $\tD$ and $\tE$ have the same normalized strength, then we have the
following solution:
\begin{align}
\alpha^*= \left\{ \begin{array}{l} 0,\;\;\;\;\;\;\;\;\;\;\;\;\;\;\;\;\;\;\;\;\;\;\;\;\;\;\;\;\;\;\;\;\;\;\;\;\;\;\text{if}\;\rho\ge \frac{|h_{ed}|^2}{|h|^2};\\
\frac{1}{1+\sqrt{ \frac{P_T+1}{(P_T|h_{ed}|^2+1)(P_T |h|^2\rho +1)}}},\; \text{if} \;{0\le \rho< \frac{|h_{ed}|^2}{|h|^2}}. \end{array}  \right.
\end{align}

It can be seen that when $0\le \rho< \frac{|h_{ed}|^2}{|h|^2}$, i.e., positive secrecy rate is possible,   the source power $p_s$ is monotonically
increasing in $\rho.$ This is because the larger $\rho$ is, the less the receiver power $p_d$ is, and the conclusion follows due to constant sum
power of $\tS$ and $\tD$.

We have shown that for the single-antenna case, $\tD$ may not use full transmit power $P_d$. To conclude this section, we study the performance at high signal-to-noise (SNR). Suppose $\frac{p_s}{p_d}=\beta$ remains constant, then
 \be
\lim_{p_d\rightarrow\infty} R_{S}  = \max\left(0, \log\left(1+ \frac{\beta|h_{sd}|^2}{\rho |h|^2 }\right) -
\log\left(1+\frac{\beta|h_{se}|^2}{|h_{ed}|^2}\right)\right),
 \ee
which indicates the secrecy rate will saturate when $p_d\rightarrow\infty$. This is because the self-interference cannot be fully canceled \cite{DAY}
and becomes a limiting factor at high SNR.

\section{Multiple-antenna Receiver}
 In this section, we study the transmitter design when both the destination and the eavesdropper has multiple antennas.
 Based on (\ref{eqn:Rs}), we first formulate the secrecy rate maximization problem with power constraint at $\tD$, i.w.,
  \bea\label{eqn:prob:0}
    \max_{\qQ,\|\qr\|=1} && \frac{ 1+ \frac{P_s |\qr^\dag\qh_{sd}|^2}{1+
    \rho\qr^\dag\qH\qQ\qH^\dag\qr}} {1+\frac{P_s\|\qh_{se}\|^2}{1+\frac{\qh_{se}^\dag\qH_{ed}\qQ\qH_{ed}^\dag\qh_{se}}{\|\qh_{se}\|^2}}} \\
    \mbox{s.t.} && \qQ\succeq \qzero,  ~\tr(\qQ)\le P_d.\notag
 \eea

 Different from the single-antenna case, the secrecy rate can
 keep increasing with transmit power, as stated in Lemma \ref{lem:increase:rate} below.
 \begin{lemma}\label{lem:increase:rate}
   Given  $M_t>1$ or $M_r>1$,  the system is not interference-limited.% $\tD$ should always use full power, i.e., $\tr(\qQ)= P_d$.
 \end{lemma}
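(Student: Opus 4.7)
My plan is to exhibit, in each of the two cases $M_t>1$ and $M_r>1$, an explicit pair $(\qr,\qQ)$ that spatially nulls the loop interference at $\tD$ while still directing useful jamming power toward $\tE$. Once the LI is nulled, the legitimate SINR in (\ref{eqn:Rs}) scales linearly with $P_s$, so the first $\log_2$ term is unbounded; meanwhile, if we simultaneously scale $P_d$ to swamp $\tE$, the subtracted $\log_2$ term stays bounded. Hence $R_S\to\infty$ along a feasible power sequence, contradicting the notion of an interference-limited regime (as opposed to the saturating limit derived at the end of Section III).

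For $M_t>1$, I would fix the matched filter $\qr=\qh_{sd}/\|\qh_{sd}\|$. Since $\qH^\dag\qr$ is a single vector in $\mathbb{C}^{M_t}$, its orthogonal complement has dimension $M_t-1\ge 1$, so there exists a unit vector $\qv$ with $\qr^\dag\qH\qv=0$. Setting $\qQ=P_d\,\qv\qv^\dag$ makes the LI term $\rho\,\qr^\dag\qH\qQ\qH^\dag\qr$ vanish identically, whereas the eavesdropper's jamming interference becomes $P_d|\qh_{se}^\dag\qH_{ed}\qv|^2/\|\qh_{se}\|^2$, which grows linearly in $P_d$ provided $\qh_{se}^\dag\qH_{ed}\qv\neq 0$.

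For $M_r>1$ (the relevant sub-case being $M_t=1$), the channel $\qH$ spans a one-dimensional subspace of $\mathbb{C}^{M_r}$, so its orthogonal complement has dimension $M_r-1\ge 1$. Choose a unit $\qr$ orthogonal to $\qH$; then $\qr^\dag\qH=0$ and the LI is killed irrespective of $\qQ$. Taking $\qQ=P_d$ (a scalar in this case) yields jamming interference $P_d|\qh_{se}^\dag\qH_{ed}|^2/\|\qh_{se}\|^2$ at $\tE$, while $|\qr^\dag\qh_{sd}|$ is generically non-zero, so the legitimate rate $\log_2(1+P_s|\qr^\dag\qh_{sd}|^2)$ again grows unboundedly with $P_s$.

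In both constructions, substituting into (\ref{eqn:Rs}) and letting $P_s=P_d\to\infty$, the first $\log_2$ term grows like $\log_2 P_s$, while the eavesdropper's SINR tends to a finite constant (noise plus a linear-in-$P_d$ interference) so its $\log_2$ term is bounded. The main subtlety is the pair of genericity conditions $\qh_{se}^\dag\qH_{ed}\qv\neq 0$ and $\qr^\dag\qh_{sd}\neq 0$; since the channels are drawn from continuous distributions the exceptional sets have measure zero, and for any particular realization one can perturb $\qv$ inside the $(M_t-1)$-dimensional null space of $\qH^\dag\qr$ (respectively, $\qr$ inside the null space of $\qH^\dag$) to restore the strict inequality, so the construction is robust.
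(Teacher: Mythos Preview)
Your proposal is correct and follows essentially the same approach as the paper: both arguments exhibit a zero-forcing choice of $(\qr,\qQ)$ that nulls the loop-interference term $\qr^\dag\qH\qQ\qH^\dag\qr$ while keeping $\qh_{se}^\dag\qH_{ed}\qQ\qH_{ed}^\dag\qh_{se}>0$, and then observe that the resulting secrecy rate grows without bound as the powers increase. Your write-up is somewhat more explicit than the paper's---you split the construction into the two cases $M_t>1$ and $M_r>1$, spell out the dimension-counting that guarantees the null space is nontrivial, and address the measure-zero genericity conditions---but the underlying idea is identical.
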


 \begin{proof}
 It suffices to show that there exists a scheme whose achievable secrecy rate  does not saturate. It is easily seen that as long as $\tD$ has either multiple transmit or receive antennas,
  $\qr$ and $\qQ$ can be chosen according to the zero-forcing (ZF) criterion such that $\qr^\dag\qH\qQ\qH^\dag\qr=0$ but
 $\qh_{se}^\dag\qH_{ed}\qQ\qH_{ed}^\dag\qh_{se}>0$. As a result, the secrecy rate becomes
 \bea
    R_{S,ZF}    &=&
    \max\left(0, \log\left(1+  P_s \|\qh_{sd}\|^2 \right) \right.\notag\\
   && \left. -
    \log\left(1+\frac{P_s\|\qh_{se}\|^2}{1+ \frac{ \qh_{se}^\dag\qH_{ed}\qQ\qH_{ed}^\dag\qh_{se}}{\|\qh_{se}\|^2}}\right)\right).
\eea
 It is noted that the jamming signal sent by $\tD$ does not affect itself but degrades the received signal-to-noise plus interference (SINR) at
 $\tE$. Therefore, $R_{S,ZF}$ is a strictly monotonically increasing function of $\tr(\qQ)$. The system is no longer interference-limited,
  and  the secrecy rate $R_{S,ZF}$ can increase with $P_d$ without saturation. This completes the proof.
 \end{proof}

In the following, we assume that $M_t>1$ and derive the optimal solutions to problem (\ref{eqn:prob:0}). When  $M_t=1$, the jamming covariance matrix
design reduces to jamming power optimization, and has been addressed in the previous section. Before we address the optimal jamming covariance
design, we will present two useful lemmas about the properties of the optimal $\qQ^*$.
\begin{lemma}\label{lemma:rank}
   The optimal $\qQ^*$ to solve (\ref{eqn:prob:0}) should be rank-1.
 \end{lemma}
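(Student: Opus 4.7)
The plan is to fix an optimal unit-norm linear receiver $\qr$ (with $\|\qr\|=1$) and show that the induced inner problem over $\qQ$ admits a rank-$1$ maximizer. Set $\qa\triangleq\qH^\dag\qr$ and $\qb\triangleq\qH_{ed}^\dag\qh_{se}/\|\qh_{se}\|$, so that the objective in (\ref{eqn:prob:0}) depends on $\qQ$ only through the two nonnegative scalars $\alpha\triangleq\qa^\dag\qQ\qa$ and $\beta\triangleq\qb^\dag\qQ\qb$, and reads
\[
g(\alpha,\beta)=\frac{1+\dfrac{P_s|\qr^\dag\qh_{sd}|^2}{1+\rho\alpha}}{1+\dfrac{P_s\|\qh_{se}\|^2}{1+\beta}},
\]
which is strictly decreasing in $\alpha$ and strictly increasing in $\beta$. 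Hence any maximizing $\qQ$ must push $(\alpha,\beta)$ to the ``small-$\alpha$, large-$\beta$'' Pareto frontier of the attainable region $\mathcal{F}\triangleq\{(\qa^\dag\qQ\qa,\qb^\dag\qQ\qb):\qQ\succeq\qzero,\ \tr(\qQ)\le P_d\}$.

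Next I would exploit convexity. The set $\mathcal{F}$ is the image, under a linear map into $\mathbb{R}^2$, of the convex compact spectraplex $\{\qQ\succeq\qzero,\tr(\qQ)\le P_d\}$, and so is itself convex and compact. A supporting-hyperplane argument in $\mathbb{R}^2$ then ensures that every Pareto-optimal point $(\alpha^\star,\beta^\star)$ is attained by some $\qQ^\star$ that solves the scalarized linear SDP
\[
\max_{\qQ\succeq\qzero,\ \tr(\qQ)\le P_d}\ \tr\bigl((\qb\qb^\dag-\lambda\qa\qa^\dag)\,\qQ\bigr)
\]
for some multiplier $\lambda\ge 0$. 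Because the objective is linear and the feasible set is a scaled spectraplex, the optimum is attained by concentrating all available power on a principal eigenvector: letting $\qv$ be a unit eigenvector of the Hermitian matrix $\qM_\lambda\triangleq\qb\qb^\dag-\lambda\qa\qa^\dag$ associated with its largest eigenvalue, we get $\qQ^\star=P_d\,\qv\qv^\dag$ whenever the largest eigenvalue of $\qM_\lambda$ is positive, and $\qQ^\star=\qzero$ otherwise. Either way, $\mathrm{rank}(\qQ^\star)\le 1$.

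The main obstacle will be a clean justification that scalarization exhausts the Pareto frontier; this ultimately rests on convexity of $\mathcal{F}$ plus a separating-hyperplane/linear-support argument in the plane. I expect the loose end of an inactive trace constraint to be absorbed by the observation that this can happen only when the largest eigenvalue of $\qM_\lambda$ is nonpositive, in which case $\qQ^\star=\qzero$ is still rank at most one. As a useful byproduct, because $\qM_\lambda$ has rank at most two, $\qv$ lies in $\mathrm{span}\{\qa,\qb\}$, which already hints at the efficient one-dimensional search over $\lambda$ (equivalently, over the Pareto frontier) that is promised in the abstract.
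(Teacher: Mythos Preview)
Your reduction to the two scalars $(\alpha,\beta)=(\qa^\dag\qQ\qa,\qb^\dag\qQ\qb)$ and the monotonicity of $g$ are correct, and the route via the Pareto frontier of the convex image $\mathcal{F}$ is a legitimate alternative to the paper's argument. There is, however, a genuine gap at the scalarization step. The supporting hyperplane tells you that the optimal pair $(\alpha^\star,\beta^\star)$ maximizes $\beta-\lambda\alpha$ over $\mathcal{F}$ for some $\lambda\ge 0$, and the scalarized linear SDP indeed admits a rank-$1$ maximizer $P_d\qv\qv^\dag$. But that rank-$1$ matrix may map to a \emph{different} point $(\alpha',\beta')$ on the same supporting line; nothing you have written forces $(\alpha',\beta')=(\alpha^\star,\beta^\star)$, and hence nothing forces $g(\alpha',\beta')=g(\alpha^\star,\beta^\star)$. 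Your sentence ``every Pareto-optimal point $(\alpha^\star,\beta^\star)$ is attained by some $\qQ^\star$ that solves the scalarized linear SDP'' and the subsequent ``we get $\qQ^\star=P_d\qv\qv^\dag$'' silently conflate two possibly distinct optimizers of the scalarized problem. You also flag the wrong obstacle: that scalarization exhausts the Pareto frontier \emph{is} immediate from convexity of $\mathcal{F}$; the real issue is the one just described.

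The paper closes exactly this gap by pinning down one coordinate with an equality constraint: it considers the auxiliary problem $\min_{\qQ\succeq\qzero,\ \tr(\qQ)\le P_d}\qa^\dag\qQ\qa$ subject to $\qb^\dag\qQ\qb=\beta^\star$, shows that the original optimizer $\qQ^*$ solves it (otherwise one would strictly improve $g$), and then invokes the rank-one result of Huang--Zhang for complex homogeneous QCQPs with two constraints. Because $\qb^\dag\qQ\qb=\beta^\star$ is an equality, the guaranteed rank-$1$ solution reproduces both coordinates $(\alpha^\star,\beta^\star)$ and hence the optimal value of $g$. Your argument can be repaired either along these lines, or more directly by invoking the Toeplitz--Hausdorff theorem: for $M_t\ge 2$ the joint numerical range $W=\{(|\qa^\dag\qq|^2,|\qb^\dag\qq|^2):\|\qq\|=1\}$ is convex, so the rank-$\le 1$ image $\{t\,w:0\le t\le P_d,\ w\in W\}$ already equals $\mathcal{F}$, and \emph{every} point of $\mathcal{F}$---in particular $(\alpha^\star,\beta^\star)$---is achieved by some rank-$1$ matrix. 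With that observation the Pareto/scalarization machinery becomes unnecessary.
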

 \begin{proof}
    Let $\qQ^*, \qr^*$ be any solution to the problem (\ref{eqn:prob:0}).
If $\qQ^*$  is rank one, the desired result is obtained. If $\qQ^*$ is not rank-1, let $\qr^\dag\qH\qQ^*\qH^\dag\qr = x$,
$\qh_{se}^\dag\qH_{ed}\qQ^*\qH_{ed}^\dag\qh_{se}=y$ and consider the following problem \be\label{eqn:Q00} \min_{\qQ}\ \qr^\dag\qH\qQ\qH^\dag\qr, \quad
\mathrm{s.t.}\quad \mathrm{Tr}(\qQ\qH_{ed}^\dag\qh_{se}\qh_{se}^\dag\qH_{ed}) = y, \ \mathrm{Tr}(\qQ) \le P_d. \ee Obviously, $\qQ^*$ is feasible for
the above problem. We want to prove that  $\qQ^*$ is also the optimal solution to (\ref{eqn:Q00}). Let $\qQ'$ be any solution to the above problem
(\ref{eqn:Q00}). Then it holds that $\qr^\dag\qH\qQ'\qH^\dag\qr = x$. This is because if $\qr^\dag\qH\qQ'\qH^\dag\qr < x$, then $\qQ'$ is feasible
for the problem of (\ref{eqn:prob:0}) but achieves a strictly larger objective value than $\qQ^*$ in (\ref{eqn:prob:0}). But this contradicts the
fact that  $\qQ^*$ is the optimal solution to (\ref{eqn:prob:0}), so $\qQ^*$ is also the optimal solution to (\ref{eqn:Q00}).

Problem (\ref{eqn:Q00}) is a homogeneous quadratically constrained quadratic program (QCQP) with two constraints. According to the results in
\cite{Huang-07}, it  has a rank-1 solution. Thus it follows that the problem of (\ref{eqn:prob:0}) has a rank-1 solution $\qQ^*$.
 \end{proof}
 The next lemma is about the power consumption of jamming signals.
\begin{lemma}\label{lemma:power}
   Given that $\qH^\dag \qr$ does not align with $\qH_{ed}^\dag\qh_{se}$, i.e., there exists no scalar $g$ such that
   $\qH^\dag \qr = g \qH_{ed}^\dag\qh_{se}$, the optimal $\qQ^*$  satisfies  $\tr(\qQ^*)=P_d$.
 \end{lemma}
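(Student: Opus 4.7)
The plan is to prove the lemma by contradiction: assume $\tr(\qQ^*) = p < P_d$, and construct a strictly better feasible $\qQ$, which contradicts optimality. The key observation is that, because of the non-alignment assumption, there is a direction in which I can pour additional jamming power so that it is invisible to the legitimate receiver but visible to the eavesdropper.

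Let me set $\qu \triangleq \qH^\dag \qr$ and $\qv \triangleq \qH_{ed}^\dag \qh_{se}$. The two quantities that control the secrecy rate through $\qQ$ in (\ref{eqn:prob:0}) are exactly the quadratic forms $\qr^\dag\qH\qQ\qH^\dag\qr = \qu^\dag \qQ \qu$ (which enters the numerator and should be small) and $\qh_{se}^\dag\qH_{ed}\qQ\qH_{ed}^\dag\qh_{se} = \qv^\dag \qQ \qv$ (which enters the denominator and should be large). So I only need to find a perturbation of $\qQ^*$ that keeps $\qu^\dag\qQ\qu$ unchanged but strictly increases $\qv^\dag\qQ\qv$; any such perturbation strictly improves the objective, which is monotonically decreasing in the former and increasing in the latter.

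First I would establish the required linear-algebraic fact: under the non-alignment hypothesis, $\qu \neq \mathbf{0}$ (otherwise $\qu = 0\cdot\qv$ would already be an alignment), and there exists a unit vector $\qw_0$ with $\qu^\dag \qw_0 = 0$ but $\qv^\dag \qw_0 \neq 0$. Indeed, if every vector in $\{\qu\}^\perp$ were orthogonal to $\qv$, then $\qv$ would lie in $(\{\qu\}^\perp)^\perp = \mathrm{span}(\qu)$, contradicting that $\qv$ is not a scalar multiple of $\qu$. Given such a $\qw_0$, I then define $\qQ_{\mathrm{new}} \triangleq \qQ^* + \delta\, \qw_0 \qw_0^\dag$ with $\delta = P_d - p > 0$. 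Clearly $\qQ_{\mathrm{new}} \succeq \qzero$ and $\tr(\qQ_{\mathrm{new}}) = P_d$, so $\qQ_{\mathrm{new}}$ is feasible in (\ref{eqn:prob:0}).

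Finally I would compare the two objectives. Because $\qu^\dag \qw_0 = 0$, the legitimate-link term satisfies $\qu^\dag \qQ_{\mathrm{new}} \qu = \qu^\dag \qQ^* \qu$, so the numerator of (\ref{eqn:prob:0}) is unchanged. Because $\qv^\dag \qw_0 \neq 0$, the eavesdropper-link term strictly grows: $\qv^\dag \qQ_{\mathrm{new}} \qv = \qv^\dag \qQ^* \qv + \delta\,|\qv^\dag \qw_0|^2 > \qv^\dag \qQ^* \qv$, which strictly shrinks the denominator of (\ref{eqn:prob:0}) and therefore strictly increases the objective. This contradicts the optimality of $\qQ^*$, so $\tr(\qQ^*) = P_d$ must hold. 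The only subtle step is the existence of $\qw_0$, which is the main point where the non-alignment hypothesis enters; once that is in hand, the rest is a direct one-line perturbation that does not even rely on the rank-1 structure of Lemma~\ref{lemma:rank}.
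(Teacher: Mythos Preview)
Your proof is correct and follows essentially the same contradiction-via-rank-one-perturbation strategy as the paper: assume slack in the trace constraint, add a rank-one term that exhausts the power budget, and derive a strict improvement. The only noteworthy difference is that you impose $\qw_0\perp\qH^\dag\qr$ and $\qw_0\not\perp\qH_{ed}^\dag\qh_{se}$, which is exactly the combination that freezes the numerator of (\ref{eqn:prob:0}) while strictly shrinking its denominator; the paper's printed proof has these two orthogonality conditions interchanged, which appears to be a slip, since with that choice the perturbation would \emph{increase} the self-interference term and leave the eavesdropper term untouched, making the objective worse rather than better. Your version is the one that actually delivers the contradiction.
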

 \begin{proof}
 We prove it  by contradiction. Suppose the optimal solution   is $\qQ_1$ and $\tr(\qQ_1)<P_d$.
    We then choose $\qx$ such that $\qh_{se}^\dag\qH_{ed}\qx=0$,  $\qr^\dag\qH\qx\ne 0$ and $\tr(\qQ_1)+\|\qx\|^2=P_d$.
    This is possible   due to the assumption that $\qH^\dag \qr$ does not align with $\qH_{ed}^\dag\qh_{se}$.
    Construct a new solution $\qQ_2 = \qQ_1 +  \qx\qx^\dag$. It is easy to verify that
    $\qQ_2$ is a strictly better solution than $\qQ_1$ with $\tr(\qQ_2)=P_d$, which contradicts the optimality of $\qQ_1$. This completes the proof.
\end{proof}

 In the special case in which the condition   $\qH^\dag \qr = g \qH_{ed}^\dag\qh_{se}$ holds,  the equivalent channels for the two links $\tD-\tD$ and $\tD-\tE$ align with each other. This is not desired and can normally be avoided by $\tD$ via proper
  design of $\qr$.
 Because this condition is easy to detect, in the sequel we assume it does not
 hold.

 By introducing $\qQ=P_d\qq\qq^\dag$ and $\|\qq\|=1$, problem (\ref{eqn:prob:0}) becomes
  \bea\label{eqn:prob:3}
    \max_{\|\qq\|=1,\|\qr\|=1} && \frac{ 1+ \frac{P_s |\qr^\dag\qh_{sd}|^2}{1+
    \rho P_d\qr^\dag\qH\qq\qq^\dag\qH^\dag\qr}}
    {1+\frac{P_s\|\qh_{se}\|^2}{1+\frac{P_d\qh_{se}^\dag\qH_{ed}\qq\qq^\dag\qH_{ed}^\dag\qh_{se}}{\|\qh_{se}\|^2}}}.
 \eea

 Next we will study problem   (\ref{eqn:prob:3}).  We begin with a fixed receiver and then move to
the optimal linear receiver; the analysis is also extended for the case of joint power allocation at $\qS$ and $\qD$.

\subsection{Optimal Solution with a Fixed Receiver}
 Let us assume that the receiver $\qr$ is fixed and independent of $\qQ$. Possible choices include the MRC and the minimum mean square error (MMSE) receivers.

 The problem of (\ref{eqn:prob:3}) is complicated and difficult to solve it directly. Instead, we solve the following problem by introducing an auxiliary
 variable $t$:
 \bea\label{eqn:gt}
  \max_{\qq} && \qh_{se}^\dag\qH_{ed} \qq\qq^\dag\qH_{ed}^\dag\qh_{se}\\
    \mbox{s.t.} && \qr^\dag\qH \qq\qq^\dag\qH^\dag\qr =   t, \|\qq\|=1.\notag
 \eea
By denoting its optimal objective value as $g(t)$, the original problem of (\ref{eqn:prob:3}) becomes
 \be\label{eqn:opt:t}
    \max_{t\ge0} f(t) \triangleq    \frac{ 1+ \frac{P_s |\qr^\dag\qh_{sd}|^2}{1+    \rho t }}
    {1+\frac{P_s\|\qh_{se}\|^2}{1+\frac{P_d g(t)}{\|\qh_{se}\|^2}}}.
 \ee
 We develop the following intermediate results to efficiently solve the problem of (\ref{eqn:gt}).
 \begin{lemma}\label{lem:gt}
   Let $\qc_2 = \frac{\qH_{ed}^\dag\qh_{se}}{\|\qH_{ed}^\dag\qh_{se}\|}$, $\qc_1 = \frac{\qH^\dag\qr}{\|\qH^\dag\qr\|}$ and $r = |\qc_1^\dagger\qc_2|$. Then $g(t)  =  1 - (r\sqrt{1-t} - \sqrt{(1-r^2)t})^2$  and  is a concave function in $t$.
 \end{lemma}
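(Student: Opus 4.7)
The plan is to reduce the optimization in (14) to a two-dimensional problem by restricting $\qq$ to $\mathrm{span}\{\qc_1,\qc_2\}$, then apply a triangle-inequality argument with optimal phase alignment, and finally read off concavity from a clean decomposition of $g(t)$.

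First I would observe that any component of $\qq$ orthogonal to $\mathrm{span}\{\qc_1,\qc_2\}$ is wasted: it contributes to the norm budget $\|\qq\|^2=1$ but affects neither the objective $|\qc_2^\dag\qq|^2$ nor the constraint $|\qc_1^\dag\qq|^2$ (after absorbing the positive scalars $\|\qH^\dag\qr\|^2$ and $\|\qH_{ed}^\dag\qh_{se}\|^2$ into the normalization of $t$ and $g$). So without loss of generality one may write $\qq=a\qe_1+b\qe_2$, where $\qe_1=\qc_1$ and $\qe_2$ is the unit vector along the component of $\qc_2$ orthogonal to $\qc_1$. In this basis $\qc_2=re^{i\phi}\qe_1+\sqrt{1-r^2}\,\qe_2$ for some phase $\phi$, and the constraint $|\qc_1^\dag\qq|^2=t$ together with $\|\qq\|=1$ forces $|a|=\sqrt{t}$ and $|b|=\sqrt{1-t}$.

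Next I would maximize $|\qc_2^\dag\qq|^2=|re^{-i\phi}a+\sqrt{1-r^2}\,b|^2$ over the remaining phases of $a$ and $b$. The triangle inequality gives $|re^{-i\phi}a+\sqrt{1-r^2}\,b|\le r\sqrt{t}+\sqrt{(1-r^2)(1-t)}$, attained with equality by aligning the two summands, so
\[
g(t)=\left(r\sqrt{t}+\sqrt{(1-r^2)(1-t)}\right)^{2}.
\]
A short algebraic expansion then shows this coincides with the stated form $1-(r\sqrt{1-t}-\sqrt{(1-r^2)t})^2$; both reduce to $r^2 t+(1-r^2)(1-t)+2r\sqrt{(1-r^2)t(1-t)}$.

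For concavity I would work from this last expression: the term $r^2 t+(1-r^2)(1-t)$ is affine in $t$, so the only nontrivial contribution is $\sqrt{t(1-t)}$, whose second derivative $-1/[4(t(1-t))^{3/2}]$ is negative on $(0,1)$. Hence $\sqrt{t(1-t)}$ is concave, and adding an affine term and multiplying by the nonnegative constant $2r\sqrt{1-r^2}$ preserves concavity. The only step requiring a bit of care is verifying the algebraic identity between the triangle-inequality expression and the form written in the lemma; the rest is a routine two-dimensional projection argument, and the degenerate cases $r=0$ and $r=1$ are easily checked separately.
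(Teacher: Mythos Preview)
Your proof is correct. The paper's own proof is much terser: after rewriting the problem as $\max_{\|\qq\|=1}|\qc_2^\dag\qq|^2$ subject to $|\qc_1^\dag\qq|^2=t$, it simply cites an external result (Lemma~2 in \cite{Li-11}) for the closed-form expression of $g(t)$, and for concavity it only says one can check that $g''(t)<0$. Your argument is essentially a self-contained derivation of that cited lemma: the projection onto $\mathrm{span}\{\qc_1,\qc_2\}$ and the triangle-inequality phase alignment are precisely the mechanism behind the closed form. Your concavity argument---decomposing $g(t)$ as an affine term plus a nonnegative multiple of $\sqrt{t(1-t)}$---is a bit cleaner than a brute-force second-derivative check and makes clear why concavity holds without any computation. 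So you follow the same reduction as the paper but supply the proof the paper outsources; what you gain is a fully self-contained argument, at the cost of a few extra lines.
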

 \begin{proof}
 With the defined notation, (\ref{eqn:gt}) becomes
 \bea\label{eqn:lem5}
  \max_{\qq}&&  \qq^\dag \qc_2\qc_2^\dag\qq, \notag\\
  \mathrm{s.t.}&& \qq^\dag \qc_1\qc_1^\dag\qq = t, \quad \|\qq\|=1.
 \eea
The closed-form solution $g(t)$ then follows from Lemma 2 in \cite{Li-11}. The concavity can be proven by confirming that the second order derivative
of $g(t)$ is negative.
 \end{proof}
 Note that a similar  problem as (\ref{eqn:lem5}) has been studied in \cite[(14)]{Zheng_TSP_11} where   individual constraints on each element of
 $\qq$ are assumed, hence, a closed-form solution  is not possible. The closed-form solution in Lemma   \ref{lem:gt} speeds up the algorithm to solve (\ref{eqn:opt:t}).

 Given Lemma \ref{lem:gt}, we have the following theorem that can be used to develop efficient algorithms to solve (\ref{eqn:opt:t}).
\begin{theorem}
$f(t)$ is quasi-concave in $t$ and its maximum can be found via bisection search.
\end{theorem}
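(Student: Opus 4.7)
The plan is to show that $f(t)$ is unimodal on its feasible interval, which for a one-dimensional continuous function is equivalent to quasi-concavity, and then to invoke the standard fact that bisection on the sign of $f'$ locates the unique maximizer. Writing $f(t)=N(t)/D(t)$ with $N(t):=1+a/(1+\rho t)$ and $D(t):=1+b/(1+cg(t))$, where $a:=P_s|\qr^\dag\qh_{sd}|^2$, $b:=P_s\|\qh_{se}\|^2$ and $c:=P_d/\|\qh_{se}\|^2$, the numerator is globally strictly decreasing in $t$, while by Lemma \ref{lem:gt} the concave function $g(t)$ has a unique interior peak $t^\star$, so the denominator strictly decreases on $[0,t^\star]$ and strictly increases on $[t^\star,\|\qH^\dag\qr\|^2]$. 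On the right sub-interval $[t^\star,\|\qH^\dag\qr\|^2]$, $N$ decreases while $D$ increases, hence $f$ strictly decreases, and the maximizer must lie in $[0,t^\star]$.

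It remains to prove unimodality on $[0,t^\star]$. I would work with $\log f = L_1 - L_2$ where $L_1:=\log N$ and $L_2:=\log D$, and compute
\[
L_1'(t)=\frac{-a\rho}{(1+\rho t)(1+\rho t+a)},\qquad L_2'(t)=\frac{-bc\,g'(t)}{(1+cg(t))(1+cg(t)+b)}.
\]
Because $g'(0^+)=+\infty$ (inherited from the $\sqrt{t(1-t)}$ term in the closed form of $g$) and $g'(t^\star)=0$, we get $(\log f)'(0^+)=+\infty$ while $(\log f)'(t^\star)=L_1'(t^\star)<0$, so at least one stationary point exists on $(0,t^\star)$. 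For uniqueness I would verify that at any stationary point $t_0$ the second-order condition $(\log f)''(t_0)<0$ holds: substituting the first-order condition $L_1'(t_0)=L_2'(t_0)$ to eliminate $g'(t_0)$ and invoking strict concavity $g''(t_0)<0$, the algebraic structure of $L_1$ and $L_2$ forces $L_1''(t_0)-L_2''(t_0)<0$. Every stationary point is then a strict local maximum, and two maxima on a connected interval would be separated by a local minimum (another stationary point), a contradiction; hence the stationary point is unique. This gives unimodality on $[0,t^\star]$, hence quasi-concavity of $f$ on the whole feasible interval, and bisection on the sign of $f'$, which is available in closed form from $N$, $D$ and $g$, locates the maximizer to arbitrary precision.

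The crux and my anticipated obstacle is the second-derivative sign verification at a stationary point: direct computation shows that $L_1$ and $L_2$ are each individually convex in $t$, so $L_1''-L_2''$ is not automatically signed, and the argument must leverage the first-order condition to trade $g'(t_0)$ for $L_1'(t_0)$ and then use $g''<0$ to close the inequality. A clean fallback, if this becomes algebraically unwieldy, is a Dinkelbach-style bisection on the target value $\alpha$: the level set $\{t:f(t)\ge\alpha\}$ rewrites as $\{t:[a+(1-\alpha)(1+\rho t)][1+cg(t)]\ge\alpha b(1+\rho t)\}$, and using the concavity of $g$ together with a case split on the sign of $1-\alpha$ one can argue that this set is an interval in $t$ for every $\alpha$, reaching quasi-concavity without a second-derivative computation, at the cost of a more delicate case analysis.
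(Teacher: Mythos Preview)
The paper does not prove this theorem from scratch; it simply invokes Theorem~3 of \cite{Zheng_TSP_11} after establishing the concavity of $g$ in Lemma~\ref{lem:gt}. Your proposal is therefore a genuinely different, self-contained route.

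The gap is exactly where you flag it. Carry out the substitution you describe: write $L_1(t)=\psi_1(\rho t)$ and $L_2(t)=\psi_2(cg(t))$ with $\psi_i(s)=\log(1+s+a_i)-\log(1+s)$, and use the identity $\psi_i''/(\psi_i')^2=\bigl(2(1+s)+a_i\bigr)/a_i$. Eliminating $g'(t_0)$ via the first-order condition $\rho\psi_1'=cg'\psi_2'$ gives, at any stationary point,
\[
(\log f)''(t_0)=2\rho^2\bigl(\psi_1'\bigr)^2\Bigl[\tfrac{1+\rho t_0}{a}-\tfrac{1+cg(t_0)}{b}\Bigr]\;-\;c\,g''(t_0)\,\psi_2'.
\]
The second term is indeed negative (since $g''<0$ and $\psi_2'<0$), but the bracket in the first term is negative \emph{if and only if} $f(t_0)>1$. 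So the assertion that ``the algebraic structure forces $L_1''-L_2''<0$'' is not unconditional: it holds precisely at stationary points with positive secrecy rate. This already rules out two local maxima above level~$1$, but does not by itself exclude oscillations of $f$ below level~$1$, so full quasi-concavity is not yet established by your main line. Your fallback actually closes the relevant part more cleanly: for $\alpha>1$ one checks that $1/(A(t)-\alpha)$, with $A(t)=1+a/(1+\rho t)$, is convex on $\{A>\alpha\}$, so the super-level set $\{f\ge\alpha\}=\{A>\alpha\}\cap\{g\ge R_\alpha\}$ is the intersection of two intervals; for $\alpha\le1$ the same computation yields concavity of $1/(A-\alpha)$ and the argument no longer goes through. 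You should therefore make explicit that the bisection is justified under the paper's standing assumption that a positive secrecy rate is achievable, i.e.\ $\max_t f(t)>1$, and restrict the search to the single interval $\{t:f(t)\ge1\}$.
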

\begin{proof}
    Given that $g(t)$  is a concave function, the result follows from Theorem 3 in \cite{Zheng_TSP_11}.
\end{proof}

\subsection{Optimal Solution with The Optimal Linear Receiver}
 In this section, we aim to jointly optimize $\qq$ as well as   the receiver design at $\tD$.
From (\ref{eqn:Rs}),  the optimal linear    receiver to maximize the received SINR at $\tD$ is given by
 \be\label{eqn:mmse:r}
    \qr = \frac{\left(\rho\qH\qQ\qH^\dag+ \qI\right)^{-1}\qh_{sd}  }{\|\left(\rho\qH\qQ\qH^\dag+ \qI\right)^{-1}\qh_{sd}\|}.
 \ee
 Then the achievable secrecy rate is expressed as
 \bea\label{eqn:RS:q}\small
    R_{S} &=&\max\left(0, \log_2\left(1+ P_s \qh_{sd}^\dag\left(\rho\qH\qQ\qH^\dag+ \qI\right)^{-1}\qh_{sd}  \right)\right.\notag\\&&\left. -
    \log_2\left(1+\frac{P_s\|\qh_{se}\|^2}{1+ \frac{\qh_{se}^\dag\qH_{ed}\qQ\qH_{ed}^\dag\qh_{se}}{\|\qh_{se}\|^2}}\right)\right)\notag\\
    &=&\max\left(0, \log_2\left(1+ P_s \|\qh_{sd}\|^2  -  \frac{\rho P_s P_d|\qh_{sd}^\dag\qH\qq|^2}{1+ \rho P_d\qq^\dag\qH^\dag\qH\qq} \right)
   \right.\notag\\ &&  \left. -\log_2\left(1+\frac{P_s\|\qh_{se}\|^2}{1+ \frac{P_d|\qh_{se}^\dag\qH_{ed}\qq|^2}{\|\qh_{se}\|^2}}\right)\right),
 \eea
  where we have used the rank-1 property of $\qQ$ and the notation $\qQ=P_d\qq\qq^\dag$ and $\|\qq\|=1$, together with matrix inversion lemma.

 The secrecy rate expression $R_S$  in (\ref{eqn:RS:q}) is still complicated. To tackle it, we study the following problem with  the parameter $t$:
 \bea\label{eqn:hq}
    \max_{\qq} && |\qh_{se}^\dag\qH_{ed} \qq|^2\\
    \mbox{s.t.} &&\frac{|\qh_{sd}^\dag\qH\qq|^2}{1+ \rho P_d\qq^\dag\qH^\dag\qH\qq}= t,   ~\|\qq\|^2= 1,\notag
 \eea
 which is a nonconvex quadratic optimization problem and difficult to solve, instead, we first study a modified problem below
 by introducing $\tilde\qQ=\qq\qq^\dag$:
%  using rank-relaxation:
  \bea\label{eqn:ht}
    \max_{\tilde\qQ } && \tr(\qh_{se}^\dag\qH_{ed}\tilde \qQ \qH_{ed}^\dag \qh_{se})\\
    \mbox{s.t.} && \tr(\tilde\qQ (\qH^\dag\qh_{sd}\qh_{sd}^\dag\qH - t\rho P_d \qH^\dag\qH )   )= t,\notag\\
    &&\tilde\qQ\succeq \qzero,   ~\tr(\tilde\qQ)= 1.\notag
 \eea
 (\ref{eqn:ht}) is a semidefinite programming problem and  the method to solve (\ref{eqn:ht}) is provided  in Appendix B. Note that in Appendix B,
 we have shown that the optimal $\tilde \qQ$ should be rank-1, so (\ref{eqn:hq}) and (\ref{eqn:ht}) are equivalent in the sense that given the optimal solution
 $\tilde \qQ^*$ to  (\ref{eqn:ht}), the optimal $\qq^*$ to solve (\ref{eqn:hq}) can be extracted via $\tilde\qQ^*=\qq^*{\qq^*}^\dag$ .
 Denote its optimal objective value as $h(t)$,   the secrecy rate maximization problem can be formulated as
  \bea\label{eqn:Rt}
   \max_{t\ge 0} && R(t) \triangleq\log \left( \frac{1+ P_s \|\qh_{sd}\|^2  -  \rho P_s P_d t }{1+\frac{P_s\|\qh_{se}\|^2}{1+ \frac{P_d h(t) }{\|\qh_{se}\|^2}}}
   \right).
 \eea
 Thus the maximum of  $R(t)$  can be found via a one-dimensional search.

\subsection{Joint $\tS$-$\tD$ Power Allocation}
\subsubsection{Optimal Solution}
 If $\tD$ and $\tS$ can share a total power $P_T$, then the problem is revised to
  \be\label{eqn:total:pow}
    \max_{\qQ\succeq \qzero, p_s \ge 0,  ~ p_s + \tr(\qQ)\le P_T} ~~ R_S,
 \ee
 where we have used $p_s$ to denote the source power as a variable.
  Since we have derived the solution when there is only power constraint $P_d$ on $\tD$ with a parameter $t$, we can perform a 2-D search over $p_s$ and $t$ to find the optimal $\qQ$ and power allocation $p_s$.
   This is because all power should be used up at  the optimum, i.e., $p_s + \tr(\qQ)=P_T$.

 For the fixed linear receiver, we can reduce 2-D search to a  1-D search. To achieve that, we first define a new objective function based on (\ref{eqn:opt:t}):
 \be\label{eqn:fpt}
  f(p_s, t) \triangleq    \frac{ 1+ \frac{p_s |\qr^\dag\qh_{sd}|^2}{1+    \rho t }}
    {1+\frac{p_s\|\qh_{se}\|^2}{1+\frac{(P_T-p_s) g(t)}{\|\qh_{se}\|^2}}}.
 \ee
 For fixed $t$, setting its derivative   regarding   $p_s$ to be zero leads to the following quadratic equation
 \be\label{eqn:eqn}
    A p_s^2 + B p_s + C=0,
 \ee
 where %$A\triangleq -\frac{ |\qr^\dag\qh_{sd}|^2}{1+    \rho t } \frac{ g(t)}{\|\qh_{se}\|^2}(\|\qh_{se}\|^2-\frac{ g(t)}{\|\qh_{se}\|^2}),
% B\triangleq-\frac{ |\qr^\dag\qh_{sd}|^2}{1+    \rho t }\frac{ g(t)}{\|\qh_{se}\|^2}(1+P_T \frac{ g(t)}{\|\qh_{se}\|^2}), C\triangleq (1+P_T \frac{ g(t)}{\|\qh_{se}\|^2}) (\frac{ |\qr^\dag\qh_{sd}|^2}{1+    \rho t }(1+P_T \frac{ g(t)}{\|\qh_{se}\|^2}) - \|\qh_{se}\|^2)$.
$A\triangleq - \frac{ g(t)}{\|\qh_{se}\|^2}(\|\qh_{se}\|^2-\frac{ g(t)}{\|\qh_{se}\|^2}),
 B\triangleq-\frac{ g(t)}{\|\qh_{se}\|^2}(1+P_T \frac{ g(t)}{\|\qh_{se}\|^2}), C\triangleq (1+P_T \frac{ g(t)}{\|\qh_{se}\|^2})
 ((1+P_T \frac{ g(t)}{\|\qh_{se}\|^2}) - \frac{(1+    \rho t)\|\qh_{se}\|^2 }{ |\qr^\dag\qh_{sd}|^2} )$.
% a = \frac{ |\qr^\dag\qh_{sd}|^2}{1+    \rho t }
% c =\|\qh_{se}\|^2
% d =\frac{ g(t)}{\|\qh_{se}\|^2}
 The optimal $p_s^*$ should either be $P_T$ or one root of the equation (\ref{eqn:eqn}) and denote it as $P_S(t)$.
 Then the total power constrained secrecy optimization problem  (\ref{eqn:total:pow}) for a fixed receiver becomes
 \be
    \max_{t\ge 0} ~~   \frac{ 1+ \frac{P_S(t) |\qr^\dag\qh_{sd}|^2}{1+    \rho t }}
    {1+\frac{P_S(t)\|\qh_{se}\|^2}{1+\frac{(P_T-P_S(t)) g(t)}{\|\qh_{se}\|^2}}}.
 \ee
 Its optimal solution can be obtained via 1-D optimization over $t$ only.

 For the optimal linear receiver, it is not possible to apply this procedure because $h(t)$, obtained from (\ref{eqn:ht}), is also a complex function of $p_d$.
 Due to this non-separability,  we have to use 2-D search to find the optimal power allocation.
\subsubsection{ZF Solution} In this section, we study a simple closed-form solution  based on  the  ZF criterion. For conciseness,
we only consider the case of the optimal receiver. This corresponds to $t=0$ in the problem  of (\ref{eqn:Rt}).

 In (\ref{eqn:RS:q}), we can see that $\qh_{sd}^\dag\qH\qq$ is a self-interference term that may limit the system performance, so we
  impose an additional constraint that self-interference is zero, i.e.,  $\qh_{sd}^\dag\qH\qq=\qzero$. Using this condition together with the
  matrix inversion  lemma, we can derive that the optimal linear receiver in (\ref{eqn:mmse:r}) reduces to $\qr = \frac{\qh_{sd}}{\|\qh_{sd}\|}$, which is
  essentially the  MRC
  receiver. Therefore (\ref{eqn:RS:q}) is simplified to
 \bea\label{eqn:RS:q:zf}
    R_{S}     &=&
    \max\left(0, \log\left(1+  p_s \|\qh_{sd}\|^2 \right)-  \right.\notag\\
    &&\left.\log\left(1+\frac{p_s\|\qh_{se}\|^2}{1+ \frac{p_d |\qh_{se}^\dag\qH_{ed}\qq|^2}{\|\qh_{se}\|^2}}\right)\right), \|\qq\|=1.
\eea
 %where we have used the notation $p_s, p_d$ to indicate that they are variables. [AP: not sure what this sentence means]

 To maximize (\ref{eqn:RS:q:zf}), we first study a simple problem below:
 \bea\label{eqn:qq}
    \max_{\qq} && \|\qh_{se}^\dag\qH_{ed}\qq\|^2\\
    \mbox{s.t.} && \qh_{sd}^\dag\qH\qq=\qzero, \|\qq\|^2=1.\notag
 \eea
The optimal solution and the optimal objective value of (\ref{eqn:qq}) are given by, respectively,
  \be
    \qq_{ZF} =  \frac{\Pi_{\qH^\dag\qh_{sd}}^\bot\qH_{ed}^\dag\qh_{se}}{\|\Pi_{\qH^\dag\qh_{sd}}^\bot\qH_{ed}^\dag\qh_{se}\|} \mbox{~and~}
    \|\Pi_{\qH^\dag\qh_{sd}}^\bot\qH_{ed}^\dag\qh_{se}\|^2,
  \ee
where $\Pi_\qX^\bot \triangleq \qI - \qX(\qX^\dag\qX)^{-1}\qX^\dag$ denotes orthogonal projection onto the orthogonal complement of the column space
of $\qX$ and has the property $\Pi_\qX^\bot\Pi_\qX^\bot=\Pi_\qX^\bot$.  The resulting secrecy rate is then written as
 \bea
    R_{S}
    &=&\max\left(0, \log\left(1+  p_s \|\qh_{sd}\|^2 \right) \right.\notag\\
    &&\left. - \log\left(1+\frac{p_s\|\qh_{se}\|^2}{1+ \frac{p_d\|\Pi_{\qH^\dag\qh_{sd}}^\bot\qH_{ed}^\dag\qh_{se}\|^2}{\|\qh_{se}\|^2}}\right)\right).
\eea
 The optimal power allocation with a total power constraint $p_s + p_d=P_T$ can then be optimized similar to (\ref{eqn:fpt}).

\section{Transmission design with CDI}
In the previous sections, we assumed that the eavesdropper CSI is perfectly known. This information  can be available when $\tE$ is also an active user in
the network (unauthorized user) but  in general it is difficult to   obtain.   In this section, we study the case in which both $\tS$ and $\tD$ have
perfect CSI,  $\qh_{sd}$, but only CDI on $\tE$. For simplicity, we assume that the elements of $\qh_{se}$ and $\qH_{ed}$ are zero-mean independent
and identically distributed (i.i.d.) Gaussian random variables with variances $\sigma_s^2$ and $\sigma^2_d$, respectively.

\subsection{Expected Secrecy Rate}
  With CDI only, we first aim to maximize the ergodic secrecy rate \cite{Ulukus-07}, i.e.,
   \bea\label{eqn:rate:expectation}
    \max_{\qQ,p_s, \|\qr\|=1} && {\tt E}_{\qh_{se}, \qH_{ed}}\left( \log_2 \left(  1+ \frac{p_s|\qr^\dag\qh_{sd}|^2}
    {1+  \rho\qr^\dag\qH\qQ\qH^\dag\qr }\right)\right.\notag\\
    &&\left. -\log_2\left( {1+\frac{p_s\|\qh_{se}\|^2}{1+ \frac{\qh_{se}^\dag\qH_{ed}\qQ\qH_{ed}^\dag\qh_{se}}{\|\qh_{se}\|^2}}} \right)\right) \\
    \mbox{s.t.} && \qQ\succeq \qzero,  p_s \ge 0,  ~ p_s + \tr(\qQ)\le P_T.\notag
 \eea
 Due to the lack of instantaneous knowledge about the channels to $\tE$, we consider a suboptimal MRC receiver based only on local information:
 $\qr=\frac{\qh_{sd}}{\|\qh_{sd}\|}$ and  $\qQ$ is chosen as $\qQ = \frac{p_d \qW\qW^\dag}{M_t-1}$, where $\qW$
is an orthogonal basis of the null space
 of $\qH^\dag\qr$, i.e., $\qr^\dag\qH\qW=\qzero, \qW \in \mathbb{C}^{M_t\times (M_t-1)}, \qW^\dag \qW = \qI_{(M_t-1)}$. Then (\ref{eqn:rate:expectation}) becomes
   \bea\label{eqn:rate:expectation:Q}
    \max_{p_d,p_s} && {\tt E}_{\qh_{se}, \qH_{ed}}\left( \log_2 \left(  1+  p_s\|\qh_{sd}\|^2 \right) \right.\notag\\
    &&\left.-\log_2\left( {1+\frac{p_s\|\qh_{se}\|^2}{1+ p_d \frac{\qh_{se}^\dag\qH_{ed}\qW\qW^\dag\qH_{ed}^\dag\qh_{se}}{(M_t-1)\|\qh_{se}\|^2}}} \right)\right) \\
    \mbox{s.t.} && p_s \ge 0,  ~ p_s +p_d\le P_T.\notag
 \eea
 The expectation is still difficult to evaluate, so we optimize its approximation below by taking expectation operations on each individual random terms:
    \bea\label{eqn:rate:expectation:UB}
    \max_{p_d,p_s} &&  \log_2 \left(  1+  p_s\|\qh_{sd}\|^2 \right)
    \\
    &&  -\log_2\left( {1+\frac{p_s {\tt E}_{\qh_{se}}(\|\qh_{se}\|^2)}{1+ p_d {\tt E}_{\qh_{se}, \qH_{ed}}
    \left[\frac{\qh_{se}^\dag\qH_{ed}\qW\qW^\dag\qH_{ed}^\dag\qh_{se}}{(M_t-1)\|\qh_{se}\|^2}\right]}} \right)\notag\\
    \mbox{s.t.} && p_s \ge 0,  ~ p_s +p_d\le P_T.\notag
 \eea
 Notice that this approximation provides neither an upper nor a lower bound of the original problem (\ref{eqn:rate:expectation:Q}). Its effect
 will be evaluated in Fig. \ref{fig:expected:rate:SNR}.

 We find the following lemma is useful to solve (\ref{eqn:rate:expectation:UB}).
 \begin{lemma}\label{lem:x}
    $X=\frac{\qh_{se}^\dag\qH_{ed}\qW\qW^\dag\qH_{ed}^\dag\qh_{se}}{\|\qh_{se}\|^2}$ is a central chi-square random variable with $2(M_t-1)$ degrees of
    freedom.  % $x=\frac{\qh_{se}^\dag\qH_{ed}\qW\qW^\dag\qH_{ed}^\dag\qh_{se}}{(M_t-1)\|\qh_{se}\|^2}$
 \end{lemma}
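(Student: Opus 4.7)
The plan is to exploit the rotational invariance of the complex Gaussian distribution together with the independence of $\qW$ from $(\qH_{ed},\qh_{se})$. The crucial structural fact is that $\qW$ is built entirely from $\qH$ (the loop channel) and $\qh_{sd}$, neither of which depends on the eavesdropper-side matrices $\qH_{ed}$ and $\qh_{se}$; therefore $\qW$ is statistically independent of the pair $(\qH_{ed},\qh_{se})$, and we may carry out the analysis conditional on $\qW$ and then observe that the conditional law does not involve $\qW$.

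First I would fix any realization of $\qW$ satisfying $\qW^\dag\qW=\qI_{M_t-1}$ and define $\qZ\triangleq \qH_{ed}\qW\in\mathbb{C}^{M_e\times(M_t-1)}$. Since the rows of $\qH_{ed}$ are i.i.d.\ $\mathcal{CN}(\qzero,\sigma_d^2\qI_{M_t})$, and right-multiplication by the semi-unitary $\qW$ preserves the covariance of each row (each row becomes $\mathcal{CN}(\qzero,\sigma_d^2\qW^\dag\qW)=\mathcal{CN}(\qzero,\sigma_d^2\qI_{M_t-1})$ and rows remain independent), $\qZ$ has i.i.d.\ $\mathcal{CN}(0,\sigma_d^2)$ entries. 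In particular its law does not depend on the specific $\qW$, and $\qZ$ remains independent of $\qh_{se}$.

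Next I would further condition on $\qh_{se}$ and let $\qu\triangleq\qh_{se}/\|\qh_{se}\|$, so $\qu$ is a deterministic unit vector. Then
\begin{equation}
\qv \triangleq \qZ^\dag\qu \in \mathbb{C}^{(M_t-1)\times 1}
\end{equation}
is a linear combination of independent $\mathcal{CN}(\qzero,\sigma_d^2\qI_{M_t-1})$ vectors (the rows of $\qZ$) with weights whose squared magnitudes sum to $\|\qu\|^2=1$; hence $\qv\sim\mathcal{CN}(\qzero,\sigma_d^2\qI_{M_t-1})$. Observe that
\begin{equation}
X=\frac{\qh_{se}^\dag\qH_{ed}\qW\qW^\dag\qH_{ed}^\dag\qh_{se}}{\|\qh_{se}\|^2}=\qu^\dag\qZ\qZ^\dag\qu=\|\qv\|^2,
\end{equation}
so $X$ is the squared norm of an $(M_t-1)$-dimensional $\mathcal{CN}(\qzero,\sigma_d^2\qI)$ vector, i.e.\ (up to the scaling $\sigma_d^2/2$ that is implicit in the paper's convention) a central chi-square random variable with $2(M_t-1)$ degrees of freedom.

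Finally, since this conditional distribution depends on neither $\qW$ nor $\qh_{se}$, the unconditional distribution of $X$ is the same, and independence from $\|\qh_{se}\|^2$ falls out for free, which is convenient for subsequent expectation computations in the ergodic-rate optimization. The only subtle point—and the one I would be careful to state explicitly—is the independence of $\qW$ from $(\qH_{ed},\qh_{se})$ followed by the invariance of the i.i.d.\ complex Gaussian law under multiplication by a semi-unitary matrix; once these two facts are in place, the chi-square conclusion is essentially immediate.
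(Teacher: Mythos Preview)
Your proof is correct and rests on the same core idea as the paper's: the unitary (rotational) invariance of an i.i.d.\ complex Gaussian matrix, applied on the left via $\qh_{se}/\|\qh_{se}\|$ and on the right via the semi-unitary $\qW$, reduces $X$ to the squared norm of an $(M_t-1)$-dimensional standard complex Gaussian vector. The paper carries this out by writing explicit unitary factorizations $\qh_{se}/\|\qh_{se}\|=\qU\qd$ and $\qW\qW^\dag=\qU_w\qD_w\qU_w^\dag$ and absorbing $\qU,\qU_w$ into $\qH_{ed}$ in one step, whereas you do it by two clean conditioning steps (first on $\qW$, then on $\qh_{se}$); your version is a bit more explicit about the independence of $\qW$ from $(\qH_{ed},\qh_{se})$ and sidesteps the paper's slightly loose ``eigenvalue decomposition of $\qW$'' phrasing for a non-square matrix, but substantively the two arguments are the same.
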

 \begin{proof}

    Suppose a matrix decomposition $\frac{\qh_{se}}{\|\qh_{se}\|}=\qU\qd$,
    where $\qU$ is a unitary matrix and  $\qd$ is a zero vector except its first element being 1.
    Define eigenvalue decomposition of $\qW=\qU_w\qD_w\qU_w^\dag$, where $\qU_w$ is a unitary matrix and
 $\qD_w$ is a diagonal matrix with all diagonal entries being 1 except one element being 0.

%   Then $X=\frac{\qh_{se}^\dag\qH_{ed}\qW\qW^\dag\qH_{ed}^\dag\qh_{se}}{\|\qh_{se}\|^2}=\qd^\dag\qH_{ed} \qU\qU_w\qD_w\qU_w^\dag\qU^\dag\qH_{ed}^\dag\qd$.
      Then $X=\frac{\qh_{se}^\dag\qH_{ed}\qW\qW^\dag\qH_{ed}^\dag\qh_{se}}{\|\qh_{se}\|^2}=\qd^\dag\qU^\dag\qH_{ed} \qU_w\qD_w\qU_w^\dag\qH_{ed}^\dag\qU\qd$.
   Statistically $X$ is identical to
   \be\qd^\dag\qH_{ed}\qD_w\qH_{ed}\qd = \sum_{n=1}^{M_t-1} \|\qH_{ed}(1,n)\|^2,\ee
   where $\qH_{ed}(1,n)$ denotes  $\qH_{ed}$'s   $(1,n)$-th element. This completes the proof.

 \end{proof}

 Using Lemma \ref{lem:x}, $\frac{{\tt }E[X]}{(M_t-1)}=1$, so we have the following formulation:
    \bea\label{eqn:rate:expectation:UB2}
    \max_{p_d,p_s} &&  \log_2 \left(  1+  p_s\|\qh_{sd}\|^2 \right)
    -\log_2\left( {1+\frac{p_s M_e \sigma^2_s}{1+ p_d \sigma^2_d }} \right)\\
    \mbox{s.t.} && p_s \ge 0,  ~ p_s +p_d\le P_T.\notag
 \eea
 Its solution can be found using similar procedures to solve (\ref{eqn:fpt}).

 \subsection{Outage Secrecy Rate}
  Now we take a different design criterion  regarding the available CDI {  for a slow fading channel},
  and we aim to maximize the $\epsilon$-outage secrecy rate $r$ defined by
  \bea
       && \mbox{Prob}_{\qh_{se}, \qH_{ed}}\left(\log_2 \left(  1+  p_s|\qr^\dag\qh_{sd}|^2
     \right)  \right. \notag \\ && \left. -\log_2\left( {1+\frac{p_s\|\qh_{se}\|^2}
    {1+ \frac{p_d x}{M_t-1}}}\right)\le r \right) = \epsilon,
  \eea
   where we have used the assumption $\qQ=\frac{p_d \qW\qW^\dag}{M_t-1}$ and $x$ is a random variable defined in Lemma \ref{lem:x}.
   The outage secrecy rate maximization problem can be formulated as
    \bea\label{eqn:rate:outage:UB3}
    \max_{p_s, p_d} && r\\ %\epsilon\\
    \mbox{s.t.} &&
    \mbox{Prob}_{\qh_{se}, \qH_{ed}}
    \left( \frac{1+  p_s\|\qh_{sd}\|^2 }{1+\frac{p_s  \|\qh_{se}\|^2 }{1+ \frac{p_d x}{M_t-1}}}\le 2^r\right)\le \epsilon,\notag\\
    &&p_s \ge 0,  ~ p_s +p_d\le P_T.\notag
 \eea
We can use bisection method to find the optimal $r$. The remaining problem is how to calculate the outage probability, which is
 rewritten below:
 \bea
   && \mbox{Prob}_{\qh_{se}, \qH_{ed}}
    \left( \frac{1+  p_s\|\qh_{sd}\|^2 }{1+\frac{p_s  \|\qh_{se}\|^2 }{1+ \frac{p_d x}{M_t-1} }}\le 2^r\right)\\
    &=&\mbox{Prob}_{\qh_{se}, \qH_{ed}}
    \left(  \frac{1+  p_s\|\qh_{sd}\|^2}{2^r} \le 1+\frac{p_s  \|\qh_{se}\|^2 }{1+ \frac{p_d x}{M_t-1}  }\right)\notag\\
   &=& \mbox{Prob}_{\qh_{se}, \qH_{ed}}
    \left(  \frac{p_s  \|\qh_{se}\|^2 }{1+\frac{p_d x}{M_t-1} }\ge  \frac{1+  p_s\|\qh_{sd}\|^2}{2^r}-1 \triangleq \alpha \right)\notag\\
  &=&   \mbox{Prob}_{\qh_{se}, \qH_{ed}}
    \left(   \frac{p_s}{\alpha}  \|\qh_{se}\|^2 - \frac{p_d x}{M_t-1}\ge  1 \right)\notag\\
 &=&   \mbox{Prob}_{\qh_{se}, \qH_{ed}}
    \left(  \gamma\triangleq \qz^\dag\qD\qz  \ge  1
    \right),  {~\mbox{and}~}\qz \sim\mathcal{CN}(0,\qI),\notag
 \eea
 where $\qD \triangleq \mbox{diag}\left(\left[\underbrace{\frac{p_s}{\alpha},\cdots, \frac{p_s}{\alpha}}_{M_e},
    \underbrace{  -p_d, \cdots, -p_d}_{M_t-1}\right]\right).$
It can be seen that $\gamma$ is an indefinite quadratic form in complex normal variables and the outage probability can be derived as
\cite{Raphaeli_96}
% evaluated  using the result in \cite[(32)]{Raphaeli_96}.
\bea &&\mbox{Prob}(\gamma\ge 1)=\frac{e^{-\frac{1}{\bar a}}}{{\bar a}^m{\bar b}^n(m-1)!(n-1)!}
\\&&\times\sum_{i=0}^{m-1}\sum_{k=0}^{m-i-1}\frac{(m-1)!(n+i-1)!}{i!(m-i-1)!k!}\left(\frac{\bar a+\bar b}{\bar a\bar b}\right)^{-(i+n)}{{\bar a}^{m-i-k}}, \notag\eea
where $\bar a \triangleq \frac{p_s}{\alpha}, \bar b \triangleq p_d, m \triangleq M_e, n\triangleq M_t-1.$

\section{Design of $\qQ$ with the Optimal linear MMSE receiver at both $\tD$ and $\tE$}

%For the optimization with fixed receiver, we consider the following MMSE receiver
% \be
%    \qr = \frac{\left(\rho\qH\qQ\qH^\dag+ \qI\right)^{-1}\qh_{sd}  }{\|\left(\rho\qH\qQ\qH^\dag+ \qI\right)^{-1}\qh_{sd}\|}
% \ee
% which takes into account the self-interference channel as well as noise power.

In previous sections, we have assumed that $\tE$ is not aware of the FD operation of $\tD$ and simply uses  an MRC   receiver $\qh_{se}^\dag$. However,
once $\tE$ learns that there is additional interference, it can adopt more advanced linear MMSE receiver:
 \be
    \qr_e = \frac{(\qH_{ed}\qQ\qH_{ed}^\dag+ \qI)^{-1}\qh_{se}}{\|(\qH_{ed}\qQ\qH_{ed}^\dag+ \qI)^{-1}\qh_{se} \|},
 \ee
  which, assuming that $\tD$ also uses the optimal linear MMSE receiver, leads to the secrecy rate
  \bea
    R_S^{'} &=& \max\left(0, \log\big(1+ P_s \qh_{sd}^\dag
    \big(\rho\qH\qQ\qH^\dag+ \qI \big)^{-1}\qh_{sd}  \big)\right. \notag\\
         &&\left. -\log\big(1+ P_s \qh_{se}^\dag\big(\rho\qH_{ed}\qQ\qH_{ed}^\dag+ \qI\big)^{-1}\qh_{se}  \big)\right).
  \eea
This results in the following secrecy rate maximization problem: \bea\label{eqn:MMSE} \max_{\qQ}&& F(\qQ) \triangleq \log\big(1+ P_s \qh_{sd}^\dag
    \big(\rho\qH\qQ\qH^\dag+ \qI \big)^{-1}\qh_{sd}  \big) \\
    && \qquad\qquad -  \log\big(1+ P_s \qh_{se}^\dag\big(\rho\qH_{ed}\qQ\qH_{ed}^\dag+ \qI\big)^{-1}\qh_{se}  \big) \notag\\
    \mathrm{s.t.} && \qQ\succeq \qzero, ~ \tr(\qQ)\le P_d.\notag
\eea For simplicity, we assume that perfect CSI is available. The problem of (\ref{eqn:MMSE}) is in general  not   convex and the optimal design of $\qQ$ to
maximize $R_S^{'}$ is a cumbersome optimization problem, so we propose to use the DC (difference of convex functions) programming \cite{An} to find a
stationary point. First, we express $F(\qQ)$ as a difference of two concave functions $f(\qQ)$ and $g(\qQ)$: \bea
 F(\qQ)&=&\log\det(\qI + P_s\qh_{sd}\qh_{sd}^\dag + \rho\qH\qQ\qH^\dag) \notag \\
&&  - \log\det(\qI + P_s\qh_{se}\qh_{se}^\dag + \rho\qH_{ed}\qQ\qH_{ed}^\dag)\notag\\
&&- \log\det (\qI + \rho\qH\qQ\qH^\dag) + \log\det(\qI + \rho\qH_{ed}\qQ\qH_{ed}^\dag) \notag \\
&\triangleq&  f(\qQ) - g(\qQ) \eea where \bea f(\qQ) &\triangleq & \log\det(\qI + P_s\qh_{sd}\qh_{sd}^\dag + \rho\qH\qQ\qH^\dag)\notag\\
&& +
\log\det(\qI + \rho\qH_{ed}\qQ\qH_{ed}^\dag), \\
g(\qQ) &\triangleq & \log\det (\qI + \rho\qH\qQ\qH^\dag)\notag\\
&& + \log\det(\qI + P_s\qh_{se}\qh_{se}^\dag + \rho\qH_{ed}\qQ\qH_{ed}^\dag). \eea
 The linearization of $g$ around the point $\qQ_k$ is
  \bea &&g_L(\qQ;\qQ_k) =\log\det (\qI + \rho\qH\qQ_k\qH^\dag) \\&&+
\log\det(\qI + P_s\qh_{se}\qh_{se}^\dag + \rho\qH_{ed}\qQ_k\qH_{ed}^\dag) \notag\\
&&\quad + \mathrm{Tr}\big(\rho \qH^\dag(\qI + \rho\qH\qQ_k\qH^\dag)^{-1}\qH(\qQ - \qQ_k) \big)
\notag\\
&&\quad + \mathrm{Tr}\big(\rho \qH_{ed}^\dag (\qI + P_s\qh_{se}\qh_{se}^\dag + \rho\qH_{ed}\qQ\qH_{ed}^\dag)^{-1}\qH_{ed} (\qQ - \qQ_k)\big).\notag
\eea DC programming is used to sequentially solve the following convex problem, $k=0, 1, \cdots$ \bea
\qQ_{k+1} &= &\arg \max_{\qQ}\ f(\qQ) - g_L(\qQ; \qQ_k) \label{DCP}\\
&&\mathrm{s.t.}\quad \qQ\succeq 0, \ \mathrm{Tr}(\qQ) \le 1. \notag \eea

To conclude this section, problem (\ref{eqn:MMSE}) can be solved by i) choosing an initial point $\qQ_0$; and ii) for $k=0, 1, \cdots$, solving
(\ref{DCP}) until the termination condition is met.

\section{Numerical Results}
Computer simulations are conducted to evaluate the performance of the proposed FD scheme.
 All channel entries are i.i.d. drawn from the Gaussian distribution $\mathcal{CN}(0,1)$.
Unless otherwise specified, it is assumed that $\tE$ is equipped with the same number of receive antennas as $\tD$, i.e., $M_e = M_r$, $M_t = M_r =2$
and $\rho=0.5$. The total transmit SNR, $P_T$ in dB, is used as power metric.
 % bits per channel use (bpcu)

 We will compare the secrecy rate performance of the proposed FD scheme with the  baseline  HD system, in which the achievable secrecy capacity is expressed as:
 \be
    C_{S,HD} = \max\left(0, \log_2(1+ P_s\|\qh_{sd,HD}\|^2) - \log_2(1+P_s\|\qh_{se}\|^2)\right),
 \ee
where $\qh_{sd,HD}$ denotes the channel between $\tS$ and all antennas at $\tD$ in the HD mode.   We  maintain the same total power for the HD and
the FD systems.

 For the optimization with fixed receiver, we consider the following (non-optimized) MMSE receiver
 \be
    \qr = \frac{\left(\rho\qH \qH^\dag+ \qI\right)^{-1}\qh_{sd}  }{\|\left(\rho\qH \qH^\dag+ \qI\right)^{-1}\qh_{sd} \|}
 \ee
 which takes into account the self-interference channel as well as noise power.

{\iffalse
 When perfect CSI about $\tE$ is available, we also consider the case that $\tE$ is aware of the FD operation of $\tD$ and uses MMSE receiver
 \be
    \qr_e = \frac{\left(\qH_{ed}\qQ\qH_{ed}^\dag+ \qI\right)^{-1}\qh_{se}}{\|\left(\qH_{ed}\qQ\qH_{ed}^\dag+ \qI\right)^{-1}\qh_{se} \|},
 \ee
  which leads to the secrecy rate below:
  \bea
    R_S^{'} = \max\left(0, \log\left(1+ P_s \qh_{sd}^\dag\left(\rho\qH\qQ\qH^\dag+ \qI\right)^{-1}\qh_{sd}  \right) -\right.\notag\\
   && \log\left(1+ P_s \qh_{se}^\dag\left(\qH_{ed}\qQ\qH_{ed}^\dag+ \qI\right)^{-1}\qh_{se}  \right)\right).
  \eea
  The optimal design of $\qQ$ to maximize $R_S^{'}$ is difficult to find, therefore, we resort to numerical algorithms to find the locally optimal
  solution for comparison.
  \fi
}

In Fig. \ref{fig:rate:SNR:siso}, we  evaluate the achievable secrecy rate against total transmit SNR for the single-antenna case. We simulate both
cases with fixed equal power allocation between $\tS$ and $\tD$, and optimal power allocation. It is seen that both FD schemes outperform the HD
operation for transmit SNR greater than $ 10$ dB, and substantial secrecy rate gain is achieved in the high SNR region. The performance of  the HD scheme
saturates when the transmit SNR is higher than $25$ dB  while  ceiling effects for both FD schemes start to appear when the transmit SNR is $50$ dB.

In Fig. \ref{fig:rate:SNR}, we show the same results as those in Fig. \ref{fig:rate:SNR:siso} for the default multi-antenna setting.
%study the achievable secrecy rate against total transmit SNR for the default multi-antenna setting.
% We simulate both
%cases with fixed equal power allocation between $\tS$ and $\tD$, and optimal power allocation.
Firstly, it is observed that for the HD mode, the secrecy rate saturates from very low SNR, because $\tE$ has the same number of antennas as $\tD$,
while $\tS$ has a single antenna and there is no external helper. On the other hand, for all FD schemes, the secrecy rate can keep increasing as the
transmit SNR increases without hitting a ceiling; this is due to the fact that multiple transmit antennas at $\tD$  help suppress self-interference
and generate jamming signal to $\tE$. It can be also seen that the joint $\tS-\tD$ power allocation only gives marginal performance gain in this
case. The optimal linear receiver at $\tD$ can improve the secrecy rate by approximately $10\%$ compared to a  fixed MMSE receiver. {  When an MMSE
receiver is used at $\tE$, the achievable secrecy rate starts to outperform the FD case only when the SNR is greater than $10$ dB, and the performance gain
is reduced compared to the case where $\tE$ uses simple MRC receiver.}

In Fig. \ref{fig:rate:rho}, we examine the impact of residual self-interference, $\rho$, when the total transmit SNR is $15$ dB. The HD mode does not
suffer from self-interference, therefore the secrecy rate remains constant. As expected, the secrecy rates of all FD schemes decrease as $\rho$
increases. All FD schemes outperform the HD scheme, even when $\rho$ is as high as $0.9$. It can  also be  seen that the performance gap between the
optimal linear receiver and the MMSE receiver becomes larger for higher $\rho$. This  is because the optimal linear receiver is much more effective
in compensating the higher self-interference.  {  When an MMSE receiver is used at $\tE$, much lower secrecy rate is achieved as compared to the case
in which  $\tE$ uses simple MRC receiver, but the FD approach  still outperforms the HD one.}

In Fig. \ref{fig:expected:rate:SNR}, we provide the results of ergodic secrecy rate against the total transmit SNR. The approximation of the ergodic
secrecy rate, which is used for power allocation, is also shown for comparison. Similar to  the perfect CSI case, the ergodic secrecy rate of the HD
scheme saturates from very low SNR, while using the FD scheme, the ergodic secrecy rate can increase without ceiling effect.

 In Fig. \ref{fig:outage:rate:SNR}, we plot the results of outage secrecy rate, where the target outage probability is $10\%$. It is seen that for the
 HD case, the achievable secrecy rate is very close to zero for all SNR regions.  Using the FD operation at $\tD$, the secrecy rate can be increased with
 SNR, even with fixed  power allocation at $\tS$ and $\tD$.

Finally in Fig. \ref{fig:rate:SNR:AS}, we investigate the impact of different combinations of transmit and receive antennas at $\tD$ assuming there
are   $M_r+M_t=4$ antennas in total. $\tE$ has $M_e=4$ receive antennas. We can see that $(M_r=2, M_t=2)$ provides the best performance, and this is
because the joint transmit and receive beamforming design can handle the self-interference efficiently. Another interesting observation is that
the case of $(M_r=3, M_t=1)$ greatly outperforms the case of $(M_r=1, M_t=3)$. This is due the fact that at $\tD$, the receiver design takes care of both useful signal and self-interference, while the transmitter design mainly aims to suppress self-interference and effectively jam $\tE$. In this sense, receiver design is
more important than the transmitter design, therefore more receive antennas can provide additional performance gain.

 %We assume single-stream artificial noise $\qQ = \qw\qw^\dag$ for simplicity, which is  clearly a suboptimal scheme.

\section{Conclusions}
 In this paper, we have proposed a new self-protection scheme against passive eavesdropping achieved by the  FD operation at the destination. This is
 of particular interest when  the secrecy performance of the system is interference-limited and trusted external helpers are not available.
 To deal with LI and maximize the achievable secrecy rate, we have studied the optimal jamming covariance matrix at the destination and possible power allocation between the source
 and the destination with both perfect CSI and CDI. We have shown that the optimal jamming covariance matrix is rank-1 and can be found via an efficient 1-D search. In addition,  a low-complexity ZF
solution and the associated achievable secrecy rate have been derived in closed-form. Using the proposed FD scheme, the system is shown to be no
longer interference-limited, in contrast to  the HD case. Substantial performance gains are observed compared
 with the conventional HD operation at the destination.

 An interesting future direction is to study more sophisticated scenarios, where the eavesdropper knows the FD strategy employed at the destination,
and performs a similar FD operation; this is likely to be studied within the framework of non-cooperative game theory.

\section*{\sc Appendices}

\section*{A. Proof of Proposition \ref{prop:pd}}

\begin{proof}
For simplicity,  we define $a\triangleq P_s |h_{sd}|^2, b\triangleq \rho|h|^2, c \triangleq P_s|h_{se}|^2,$ and  $ d\triangleq |h_{ed}|^2$. Then
$f_\rho(x)$ can be re-expressed as
 \be
    f_\rho(x) = \frac{1+ \frac{a}{1+bx}}{1+\frac{c}{1+dx}}.
 \ee

 Setting its first-order  derivative of $f_\rho(x)$ to be zero leads to
 \be\label{eqn:fxzero}
 (cb-ad) x^2 + 2(c-a) x -\frac{a}{d}(1+c)+\frac{c}{b}(1+a)=0.
 \ee

  We first consider a trivial case that  $cb-ad=0$.  In this case, positive secrecy rate is possible only when $a>c$.   Consequently, the first-order derivative is negative, which indicates that
  $p_d^*(\rho) =0$ is the optimal solution.   In the following, we assume that $cb-ad\ne 0$.

 We   show that  $x_1(\rho)$ and $x_2(\rho)$ cannot be both positive, which requires
  \bea
   x_1(\rho)+ x_2(\rho)= \frac{ a-c}{cb-ad}>0, \\~~x_1(\rho) x_2(\rho) =  \frac{cd(1+a)-ab(1+c)}{bd(cb-ad)}>0.
  \eea
We assume the first equality is true, i.e., $x_1(\rho)+ x_2(\rho)= \frac{ a-c}{cb-ad}>0$, in the following, we show that the second one cannot be
satisfied.
 \begin{itemize}
    \item[1)] $a\ge c$:

    In this case, $cb-ad>0$ and $b>d$. We have
    $$cd(1+a)-ab(1+c)\le cb(1+a)-ab(1+c)=(c-a)b\le0.$$

   Thus, $x_1(\rho) x_2(\rho) =  \frac{cd(1+a)-ab(1+c)}{bd(cb-ad)}\le0$.

    \item[2)] $a<c$:

    In this case, $cb-ad<0$ and $b<d$. We have
    $$cd(1+a)-ab(1+c)> cb(1+a)-ab(1+c)=(c-a)b>0.$$
      Thus, $x_1(\rho) x_2(\rho) =  \frac{cd(1+a)-ab(1+c)}{bd(cb-ad)}<0$.
 \end{itemize}
  As a result, we know that $x_2(\rho)$ is the only possible positive root of $f_\rho'(x)=0$ if there exists one.

  When $\frac{ad}{cb} = \frac{\delta}{\rho}>1$ or $\rho\le \delta$,  $f_\rho^{'}(p_d)>0$ for $0\le p_d\le x_2(\rho)$, so $p_d^*(\rho) =\min(P_d, x_2(\rho))$;
 when  $\rho> \delta$, it can be verified that
 $f_\rho^{'}(p_d)<0$ for $0\le p_d\le x_2(\rho)$ and $f_\rho^{'}(p_d)>0$ for $p_d> x_2(\rho)$. So if $ x_2(\rho)\ge P_d$, $p_d^*(\rho) =0$; otherwise
 $p_d^*(\rho) =0$ or $p_d^*(\rho) =P_d$. The proof is completed.
\end{proof}

\section*{\sc Appendix B}
 In this appendix, we derive  the solution to the problem below:
\bea \label{eqn:opt:R:w}
    \max_{\qQ\succeq \qzero} &&   \qa^\dag\qQ\qa \\
    \mbox{s.t.} && \tr(\qQ\qR)=t, \tr(\qQ)=1,\notag
\eea where $\qQ$ is an $M_t\times M_t$ matrix variable, $\qR=\qH^\dag\qh_{sd}\qh_{sd}^\dag\qH - t\rho P_d \qH^\dag\qH $ is a Hermitian matrix which
is not positive semidefinite and $\qa=\qH_{ed}^\dag \qh_{se}$ is an $M_t\times 1$ vector. $t$ is a positive scalar.

 The dual problem is
 \bea
    \min_{\lambda_1,\lambda_2} && \lambda_1 t + \lambda_2  \\
    \mbox{s.t.} && \lambda_2\qI + \lambda_1\qR -\qa\qa^\dag\succeq \qzero.\notag
 \eea
 It can be checked that $\lambda_2>0$, otherwise, the secondary matrix inequality cannot be satisfied.
 It is also known that there is at least one rank-1 optimal solution $\qq$ that satisfies $\qQ=\qq\qq^\dag$.

 Due to the complementary slackness, we know  that
 \be
    (\lambda_2\qI + \lambda_1\qR -\qa\qa^\dag)\qq=\qzero.
 \ee

We first consider a special case that $\lambda_2\qI + \lambda_1\qR$ is positive semi-definite and its smallest eigenvalue is 0. % and \succeq\qzero$ and .
Then we have the following inequality:
    \bea
    0&=& \lambda_{\min} (\lambda_2\qI + \lambda_1\qR -\qa \qa^\dag)\\
     &=& \min_{\|\qx\|=1} \qx^\dag(\lambda_2\qI + \lambda_1\qR-\qa \qa^\dag)\qx\\
       & =&\min_{\|\qx\|=1} (\qx^\dag (\lambda_2\qI + \lambda_1\qR)\qx -\qx^\dag\qa \qa^\dag \qx)\\
        &\le&\min_{\|\qx\|=1} (\qx^\dag (\lambda_2\qI + \lambda_1\qR)\qx + \lambda_{\max} (-\qa \qa^\dag))\\
        &=&\min_{\|\qx\|=1}  \qx^\dag (\lambda_2\qI + \lambda_1\qR)\qx \\
        &=& 0.
    \eea

Suppose one eigenvector that achieves $\min_{\|\qx\|=1}  \qx^\dag (\lambda_2\qI + \lambda_1\qR)\qx$ is $\qq_1$, i.e., $\qq_1^\dag(\lambda_2\qI +
\lambda_1\qR)\qq_1=0,$ then the equality sign is attained iff $\qq_1^\dag\qa=0$, which is a trivial case.

 In the following, we assume  $\lambda_2\qI + \lambda_1\qR\succ\qzero$. It is easy to verify that at the optimum, $\qa^\dag(\lambda_1\qR + \lambda_2\qI)^{-1}\qa=1$.  Suppose the eigenvalue
  decomposition of $\qR$ is $\qR=\qU\qD\qU^\dag, \qU \qU^\dag =\qI, \qD = \mbox{diag}(d_1, \cdots, d_{M_t})$, then
 $\qq$ can be expressed as
 \be\label{eqn:q}
    \qq = \qU (\bar\lambda_1\qD + \bar\lambda_2\qI)^{-1}\qb, \qb \triangleq \qU^\dag \qa=[b_1,\cdots, b_{M_t}]^T,
 \ee
 where $\bar\lambda_1= s\lambda_1, \bar\lambda_2= s \lambda_2, s = \frac{1}{\qb^\dag(\lambda_1\qD + \lambda_2\qI)^{-1}\qb}$. The remaining
 task is to identify the optimal $\bar\lambda_1$ and $\bar\lambda_2$.

 Substitute (\ref{eqn:q}) into the two equality constraints in (\ref{eqn:opt:R:w}), then we get the following two equations:
 \be
   \left\{\begin{array}{c}
        \sum_{i=1}^{M_r} |b_i|^2\frac{d_i}{(\bar\lambda_1 d_i + \bar\lambda_2)^2} = t\\
          \sum_{i=1}^{M_t} |b_i|^2\frac{1}{(\bar\lambda_1 d_i + \bar\lambda_2)^2} = 1,
        \end{array}\right.
\ee from which we can solve all solutions $(\bar\lambda_1,\bar\lambda_2)$  and choose the one which satisfies $\bar\lambda_2>0$ and returns the
maximum objective value $\bar\lambda_1 t + \bar\lambda_2 $.

 \begin{figure}[t]
  % Requires \usepackage{graphicx}
  \centering
  \includegraphics[width=3in]{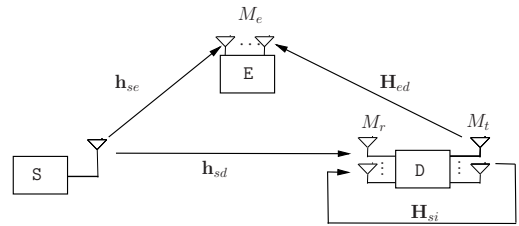}
  \vspace{-0.3cm}
  \caption{System model with FD receiver.}\label{fig:sys}
\end{figure}

\begin{figure}
  % Requires \usepackage{graphicx}
  \centering
  \includegraphics[width=3in]{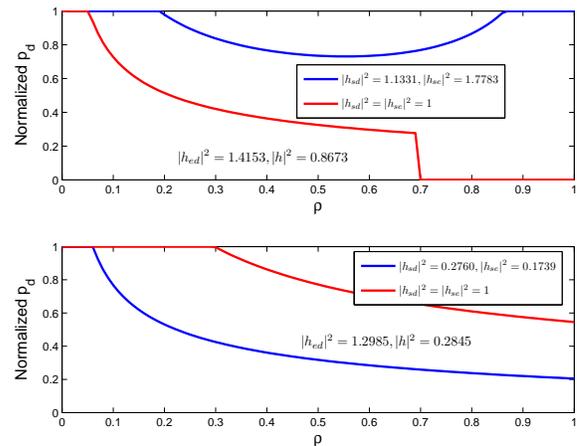}\\
  \vspace{-0.4cm}
  \caption{The impacts of $\rho$ on the optimal power allocation at $\tD$.}\label{fig:pd:nonmono}
\end{figure}

\begin{figure}[h]
  % Requires \usepackage{graphicx}
  \centering
  \includegraphics[width=3in]{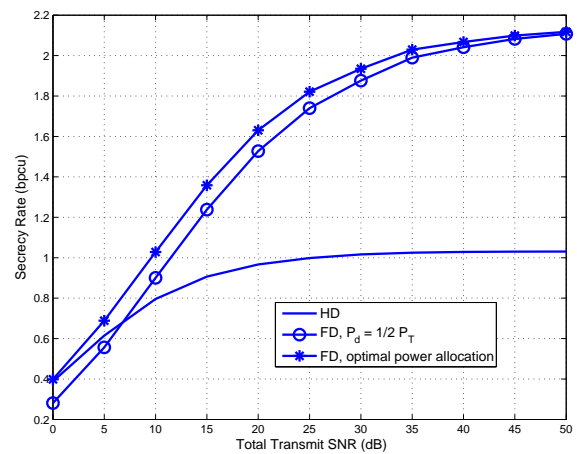}
  \caption{Secrecy rate vs. total transmit power for the single-antenna case.}\label{fig:rate:SNR:siso}
\end{figure}

 \begin{figure}[h]
  % Requires \usepackage{graphicx}
  \centering
  \includegraphics[width=3in]{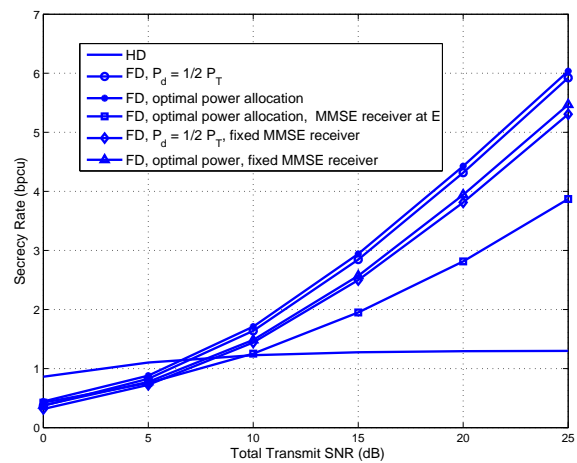}
  \caption{Secrecy rate vs. total transmit power for the multi-antenna case.}\label{fig:rate:SNR}
\end{figure}

 \begin{figure}[h]
  % Requires \usepackage{graphicx}
  \centering
  \includegraphics[width=3.5in]{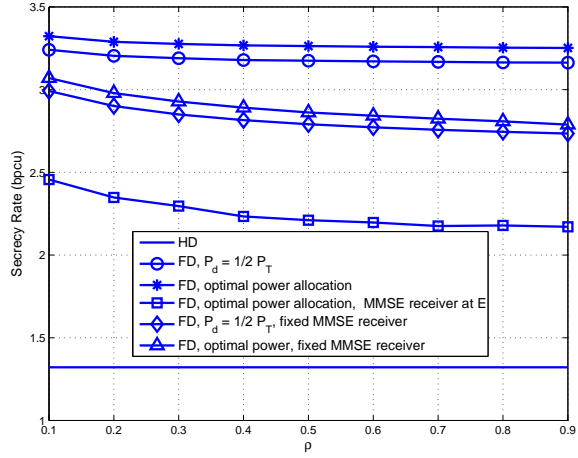}%{fig_rate_rho_5dB}
  \caption{The effects of the LI channel strength $\rho$ on the achievable secrecy rate.}\label{fig:rate:rho}
\end{figure}

 \begin{figure}[h]
  % Requires \usepackage{graphicx}
  \centering
  \includegraphics[width=3in]{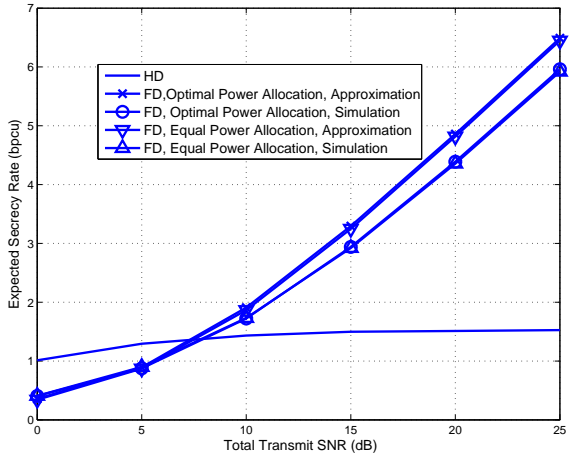}
  \caption{ Ergodic secrecy rate performance for the fast fading channel when only the CDI about $\tE$ is available.}\label{fig:expected:rate:SNR}
\end{figure}

 \begin{figure}[h]
  % Requires \usepackage{graphicx}
  \centering
  \includegraphics[width=3in]{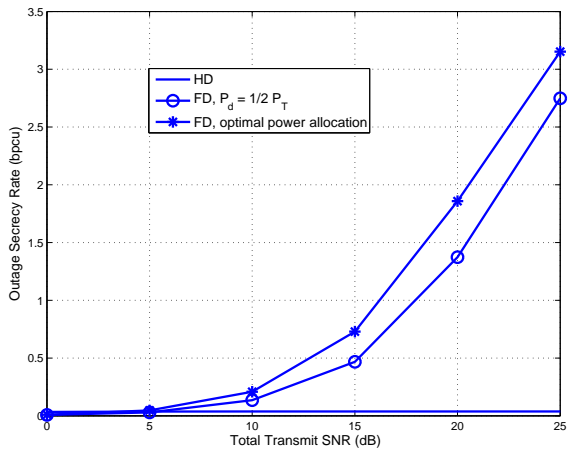}
  \caption{Outage secrecy rate  for the slow fading channel when only the CDI about $\tE$ is available. }\label{fig:outage:rate:SNR}
\end{figure}

 \begin{figure}[h]
  % Requires \usepackage{graphicx}
  \centering
  \includegraphics[width=3in]{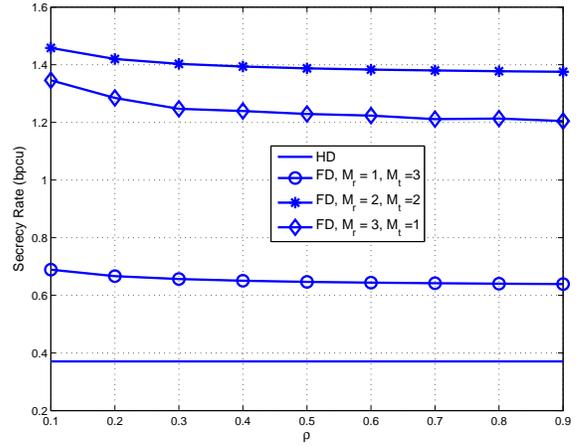}% {fig_rate_vs_SNR_AS_large_rho}
  \caption{The impacts of $\rho$  and antenna configurations at $\tD$ on the achievable secrecy rate.}\label{fig:rate:SNR:AS}
\end{figure}

 \end{document}